\documentclass[journal]{IEEEtran}
\usepackage{amssymb}
\usepackage{mathrsfs}
\usepackage{graphicx}
\usepackage{amsmath, amssymb, amsthm}
\usepackage{leftidx}
\usepackage{extarrows}
\usepackage{stfloats}
\usepackage{picinpar}
\usepackage{enumerate}
\usepackage{algorithm}
\usepackage{algorithmicx}
\usepackage{algpseudocode}
\usepackage{epsfig}
\usepackage{latexsym}
\usepackage{amsfonts}
\usepackage{enumerate}
\usepackage{graphics}
\usepackage{MnSymbol}
\usepackage{float}
\usepackage{pict2e}
\usepackage{tikz}
\newtheorem{theorem}{Theorem}

\newtheorem{lemma}{Lemma}
\newtheorem{remark}{Remark}
\newtheorem{definition}{Definition}

\newtheorem{proposition}{Proposition}

\usetikzlibrary{arrows}
\tikzset{
  treenode/.style = {align=center, inner sep=0pt, text centered,
    font=\sffamily},
  arn_n/.style = {treenode, circle, white, font=\sffamily\bfseries, draw=black,
    fill=black, text width=1.5em},
  arn_r/.style = {treenode, circle, red, draw=red,
    text width=1.5em, very thick},
  arn_x/.style = {treenode, rectangle, draw=black,
    minimum width=0.5em, minimum height=0.5em}
}

\title{Pinning Stabilizer Design for Large-Scale Probabilistic Boolean Networks}

\author{Lin Lin, Jinde Cao$^\dag$,~\IEEEmembership{Fellow,~IEEE}, Jianquan Lu,~\IEEEmembership{Senior Member,~IEEE}, and Jie Zhong
\thanks{Version October, 2020.}
\thanks{$^\dag$Corresponding author: Jinde Cao.}
\thanks{L. Lin and J. Lu are with the Jiangsu
Provincial Key Laboratory of Networked Collective Intelligence, the School of Mathematics, Southeast University, Nanjing 210096, China (email: linlin00wa@gmail.com; jqluma@seu.edu.cn).}
\thanks{J. Cao is with School of Mathematics, Southeast University, Nanjing, China, and with Yonsei Frontier Lab, Yonsei University, Seoul, Korea (email: jdcao@seu.edu.cn).}
\thanks{J. Zhong is with College of Mathematics and Computer Science, Zhejiang Normal University, Jinhua 321004, China (email: jiezhong0615math@zjnu.edu.cn).}}

\begin{document}
\maketitle
\thispagestyle{empty}
\pagestyle{empty}
\begin{abstract}
  This paper investigates the stabilization of probabilistic Boolean networks (PBNs) via a novel pinning control strategy based on network structure. In a PBN, the evolution equation of each gene switches among a collection of candidate Boolean functions with probability distributions that govern the activation frequency of each Boolean function.
  Owing to the stochasticity, the uniform state feedback controller, independent of switching signal, might be out of work, and in this case, the non-uniform state feedback controller is required. Subsequently, a criterion is derived to determine whether uniform controllers is applicable to achieve stabilization. It is worth pointing out that the pinning control designed in this paper is based on the network structure, which only requires local in-neighbors' information, rather than global information (state transition matrix). Moreover, this pinning control strategy reduces the computational complexity from $O(2^{2n})$ to $O(n2^\alpha)$, and thus it has the ability to handle some large-scale networks, especially the networks with sparse connections. Finally, the mammalian cell-cycle encountering a mutated phenotype is modelled by a PBN to demonstrate the obtained results.
\end{abstract}

\begin{IEEEkeywords}
Probabilistic Boolean networks; stabilization; pinning control; network structure.
\end{IEEEkeywords}

\section{Introduction}\label{section-introduction}
A salient issue in biological regulatory networks is to properly understand the structure and temporal behaviour, which requires to integrate regulatory data into a formal dynamical model \cite{de2002modelling}. Although this problem has been recurrently solved by standard mathematical methods, such as differential or stochastic equations, it is still complicated owing to the diversity and sophistication of regulatory mechanisms, as well as the chronic lack of credible quantitative information \cite{faure2006dynamical}. Motivated by such defect, the intrinsically qualitative methods were developed to learn on Boolean algebra or generalisation thereof \cite{kauffman1969metabolic}. Boolean network (BN), as an effective approach for exploring the evolution patterns and structure, has increasingly attracted much interest. In a Boolean model, each node is valued as a binary logical variable, and its state is updated according to a certain specified logical rule composed of basic logical operators and its neighbors' states \cite{kauffman1969metabolic,azuma1,azuma2}.

Subsequently, probabilistic BNs (PBNs) were introduced in \cite{shmulevich2002probabilistic} to characterize the switch-like behavior of gene regulation networks. Such switch-like behavior is reflected when cells move from one state to another in a normal growth process and when cells respond to external signals. Particularly, switching in probability occurs in the discrete decision-making processes of the cell. To be more detailed, PBN is an effective tool to characterize the signal pathway of the mammalian cell-cycle with mutation phenotype \cite{faryabi2008regulatory}, which is further demonstrated in the simulation of this paper.

In recent years, Cheng and his cooperators \cite{chengdz2011springer} proposed an algebraic technique, called semi-tensor product (STP), for the analysis of Boolean (control) networks. The STP of matrices is defined as a novel matrix product for two arbitrary-dimensional matrices, thereby it breaks the traditional dimension-matching condition and is more flexible to utilize \cite{cheng2010analysis}. Based on the STP technique, several fundamental problems in control theory have been investigated for BNs and PBNs, including but not limited to, stability and stabilization \cite{zhusy2018tac,liht2017siam,liht2019siam}, controllability \cite{lu2018stabilization,Margaliot2018Auto}, observability \cite{margaliot2012aut1218,yyy2019observability,guoyq,guoyq2019auto}, synchronization \cite{chenhw2019}, optimal control \cite{wu2017policy,wuyhtac}, and other related problems \cite{lirui2018siam,yyyTAC,liuyang2017TAC}. In essence, the STP of matrices linearizes the algebraic function by enumerating the state space. It leads to a high computational complexity of many developed methods, and these methods are difficultly applied on large-dimensional BNs \cite{cheng2010analysis}.

Among the control-related issues, stabilization is a fundamental and essential problem in the therapeutic intervention and safety verification \cite{li2013state}. More precisely, a recent research discovers that gene activity emerges spontaneous and orderly collective behaviour \cite{huang1999gene}, and coincidentally, it can be properly demonstrated by the BN being stabilized to a certain state. In addition, in the long time evolution of genes in gene regulatory networks, steady states usually represent cell states, including cell death or unregulated growth. Hence, there is abundant justification in the assertion that one needs to design an efficient control strategy, under which the gene regulatory network is guided to a desirable state and remains at this state afterward.

In the last decades, the stabilization of BNs and PBNs has been studied mostly by means of traditional discrete control, including state feedback control \cite{lu2018stabilization,li2013state}, sampled-data control \cite{lu2018stabilization}, as well as event-triggered control \cite{zhusy2019}. These controllers are applied to either all the nodes or some randomly selected nodes of a BN; it may result in the greater control cost or redundant control inputs for some nodes.

Recently, one significative method, called pinning control strategy, has been introduced in \cite{lu2020tac} and has received considerable attentions. The main conception of pinning control is that only a fraction of nodes are determined to be imposed state feedback controllers, while the remaining nodes can be propagated through the coupling among nodes \cite{wu2013synchronization}. In the existing works, control design is almost from the point of state transition matrix \cite{liff2015pinning,liff2019set}. Unfortunately, such design approach will lead to a high-dimensional form of workable controllers associated with high computational complexity. For a BN with $n$ nodes, its state transition matrix is $2^{2n}$ dimension, and then the time complexity of designing pinning controllers is $O(2^{2n})$. In order to overcome this aporia, Zhong et al. \cite{zhong2020arxiv} developed a novel pinning control design strategy, which only utilized the local neighbors' information in the network structure, rather than the traditional global state transition space. Subsequently, Zhu et al. \cite{zhusy2020arxiv} further utilized such design method to address the observability problem for BNs. It is worth mentioning that this method successfully reduces the time complexity from $O(2^{2n})$ to $O(n2^{\alpha})$, where $\alpha$ is the largest in-degree of nodes in wring digraph. Hence, it provides a potential applications on some large-scale networks.

Inspired by the above discussions, we study the pinning control design for PBNs in consideration of the stochastic switching signal. The main contributions of this paper are concluded as follows:
\begin{enumerate}
  \item A more efficient pinning control is designed to stabilize PBNs with respect to the feature of each pinning node. More precisely, some pinning nodes can be imposed uniform state feedback controllers to reduce control cost; while other pinning nodes need to utilize different controllers for different possibilities. Particularly, a criterion is proposed to identify pinning nodes that can be applied on uniform controllers.
  \item The first group of pinning nodes (to achieve stabilization) is easily determined by finding a feasible feedback arc set (FAS) in the wiring digraph of the PBN. Besides, the second group of pinning nodes (to achieve stabilization at a certain state) can be selected by solving an optimization problem.
  \item The time complexity of our approach is dramatically reduced from $O(2^{2n})$ as in \cite{wanglq2019pinning,liff2019set} to $O(n2^{\max_{i\in[1,n]}|\mathcal{N}_i|})$, where $\mathcal{N}_i$ is the in-neighbor set of node $x_i$. Thereby, it is more implementable than conventional pinning control design methods when applied in large-scale networks, especially sparse networks.
\end{enumerate}

The reminder of this paper is organized as follows: Section \ref{section-preliminaries} introduces some basic definitions and model formulation. Sections \ref{section-further stability} and \ref{section-graphicaltheory} respectively design pinning control to achieve stabilization and stabilizing at a certain state. An biological example is established in Section \ref{section-conclusion}, followed by a brief conclusion in Section \ref{section-stabilization}.

\section{Preliminaries and model formulation}\label{section-preliminaries}

\subsection{Notations}
For the better expression, we list some basic notations, which are used throughout this study.
\begin{itemize}
  \item $\mathbb{N}_+$ is the set of positive integers;

  \item $[a,b]:=\{a,a+1,\cdots,b-1,b\}\subseteq \mathbb{N_+}$;

  \item $\mathscr{D}~:=\{0,1\}$ and
  $\mathscr{D}^n:=\underbrace{\mathscr{D}\times \mathscr{D} \times \cdots \times \mathscr{D}}_n$;

  \item $\text{Col}_j(A)$ (resp., $\text{Col}(A)$) is the $j$-th column (resp., column set) of matrix $A$;

  \item $\Delta_n:=\{\text{Col}_i(I_n)~|~i\in[1,n]\}$, where $I_n$ is the $n\times n$ identity matrix;

  \item $\mathscr{R}^{m\times n}$ (resp., $\mathscr{L}^{m\times n}$) is the set of $m\times n$ real matrices (resp., logical matrices). Moreover, matrix $A\in\mathscr{R}^{m\times n}$ is called a logical matrix, if $\text{Col}(A)\subseteq \Delta_{m}$;

  \item $\mathbf{r}:=(r_1, \cdots, r_p)^\top$ is a $p$-dimensional probability vector satisfying $r_i\geq 0,i\in[1,p]$ and $\Sigma_{i=1}^{p}r_i=1$;

  \item $\mathscr{P}^{m \times n}$ is the set of $m \times n$ probability matrices, whose columns are $m$-dimensional probability vectors;

  \item $\mathbb{B}_2$ is the set of $2$-dimensional probability vectors, where each element is larger than 0. Besides, $\mathbb{B}$ is the set of real numbers larger than 0 and less than 1;

  \item $[a_i]_{i\in I}:=a_{i_1},\cdots,a_{i_k}\in \{a_1,\cdots,a_n\}$ is a series of variables, where index set $I=\{i_1,\cdots,i_k\}\subseteq [1,n]$;

  \item $A^\top$ is the transposition of matrix $A$;

  \item $|\Omega|$ is the cardinal number of set $\Omega$.
\end{itemize}

\subsection{Problem Description}
PBN is a kind of logical network composed by certain number of nodes, in which each node is interacted by logical operators, including conjunction, disjunction, negation, and so on. Accordingly, a PBN with $n$ nodes considered in this paper can be described as:
\begin{equation}\label{equation-PBN}\left\{\begin{array}{l}
x_1(t+1)=f_1([x_j(t)]_{j\in \mathcal{N}_1}),\\
x_2(t+1)=f_2([x_j(t)]_{j\in \mathcal{N}_2}),\\
~~~~~~~~~~~~\vdots\\
x_n(t+1)=f_n([x_j(t)]_{j\in \mathcal{N}_n}),
\end{array}\right.
\end{equation}
where $x_i(t)\in \mathscr{D}, i\in [1,n]$ denotes the state of the $i$-th node at time instance $t$, set $\mathcal{N}_i\subseteq [1,n]$ contains the subscript indices of the in-neighbors of node $x_i$, $f_i:\mathscr{D}^{|\mathcal{N}_i|}\rightarrow \mathscr{D}, i\in[1,n]$ is the logical function for node $x_i$. Precisely, each logical function $f_i,i\in [1,n]$ for node $x_i$ has $l_i$ possibilities, and is chosen from a specific finite logical function set $\mathscr{F}_i=\{f_i^1,f_i^2,\cdots,f_i^{l_i}\}$. The probability of $f_i$ being $f_i^\kappa,\kappa\in[1,l_i]$ is assumed to be $p_i^\kappa$ with $\sum_{\kappa=1}^{l_i}p_i^\kappa=1$.
There are $\Pi_{i=1}^{n}l_{i}$ possibilities, and the $\lambda$-th model is denoted by $\Sigma_\lambda=\{f_1^{\lambda_1},f_2^{\lambda_2},\cdots,f_n^{\lambda_n}\}$, where $\lambda=\Sigma_{i=1}^{n-1}(\lambda_i-1)\Pi_{j=1}^{n-1}l_{j+1}+\lambda_n,\lambda_i \in [1,l_i]$. In this paper, the probability of each logical function is assumed to be mutually independent, then the probability of $\Sigma_\lambda$ being active is $P_\lambda=\Pi_{i=1}^{n}p_i^{\lambda_i}$. At each time step, $x(t)=(x_1(t),x_2(t),\cdots,x_n(t))\in \mathscr{D}^n$ is defined as the state of PBN (\ref{equation-PBN}).

\begin{definition}[Functional Variable]\label{functional-variable}
Consider logical function $f(x_1,\cdots,x_n):\mathscr{D}^{n}\rightarrow \mathscr{D}$, if $f(x_1,\cdots,x_i,\cdots,x_n)\neq f(x_1,\cdots,\neg x_i,\cdots,x_n)$, variable $x_i$ is called functional variable, otherwise it is called non-functional variable.
\end{definition}

Accordingly, denote $[x_j]_{j\in \mathcal{N}_{i,\kappa}}$ by functional variables for each possibility $f_i^\kappa\in \mathscr{F}_i,\kappa\in [1,l_i]$.
Thereupon, PBN (\ref{equation-PBN}) can be simplified by removing non-functional variables, and its network structure can be depicted by a wiring digraph defined as follows.

\begin{definition}[Wiring Digraph]\label{wiring-digraph}
The wiring digraph of a BN (or a PBN) with $n$ state components is denoted by $\mathbf{G}=(\mathbf{V},\mathbf{E})$. Thereinto, node set $\mathbf{V}$ is equivalent to $\{x_1,\cdots,x_n\}$, and edge set $\mathbf{E}$ consists of all edges $x_j\rightarrow x_i, i\in [1,n]$, where variable $x_j$ is a functional variable of logical function $f_i$ with positive probability.
\end{definition}

\begin{definition}[Global Stability]\label{global-stability}
The state $x(t)\in \mathscr{D}^n$ of PBN (\ref{equation-PBN}) is called a steady state if $P\{x(t+1)=x(t)\}=1$. Besides, $\{x(t),x(t+1),\cdots,x(t+s-1)\}\subseteq \mathscr{D}^n$ of PBN (\ref{equation-PBN}) is called an $s$-length cyclic attractor if $P\{x(t+s)=x(t)~\mathrm{and}~ x(t+\tau)\neq x(t), \tau\in [1,s-1]\}=1$. Then, PBN (\ref{equation-PBN}) is globally stable, if there exists a unique steady state as the attractor without cyclic attractor.
\end{definition}

\subsection{Model Transformation}
To convert the logical equation (\ref{equation-PBN}) into an algebraic representation, we introduce the STP tool. Moreover, its detailed properties and applications can be acquired in \cite{chengdz2011springer,cheng2010analysis}.

\begin{definition}[\cite{chengdz2011springer}]\label{stp}
Given two matrices $A\in \mathscr{R}^{m \times n}$ and $B\in \mathscr{R}^{p\times q}$, the STP of $A$ and $B$, termed as $A \ltimes B$, is defined as
\[A \ltimes B=(A\otimes I_{\frac{\alpha}{n}})(B\otimes I_{\frac{\alpha}{p}}).\]
Here, `$~\otimes$' is the Kronecker product, and $\alpha$ is the least common multiple of $n$ and $p$.
In addition, the symbol `$\ltimes$' is hereafter omitted if no confusion occurs.
\end{definition}

\begin{proposition}[\cite{chengdz2011springer}]\label{proposition-properties}
Some operation properties:
\begin{itemize}
  \item[1)] Let $A\in \mathscr{R}^{m \times n}$, $B\in \mathscr{R}^{p \times q}$ and $C\in \mathscr{R}^{r\times d}$, then $(A\ltimes B)\ltimes C = A\ltimes (B \ltimes C)$;

  \item[2)] Let $A\in \mathscr{R}^{p\times 1}$ and $B\in \mathscr{R}^{m\times n}$, then $A \ltimes B=(I_{p} \otimes B)\ltimes A$;

  \item[3)] Let $A\in \mathscr{R}^{p\times 1}$ and $B\in \mathscr{R}^{d\times 1}$, then $B \ltimes A= W_{[p,d]} \ltimes A \ltimes B$, where $W_{[p,d]}=[I_{d}\otimes \delta_{p}^{1}, \cdots, I_{d} \otimes \delta_{p}^{p}]\in \mathscr{L}^{pd\times pd}$ is called a swap matrix.
\end{itemize}
\end{proposition}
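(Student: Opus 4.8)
The plan is to derive all three identities from elementary facts about the Kronecker product: the mixed-product rule $(A\otimes B)(C\otimes D)=(AC)\otimes(BD)$ (valid whenever $AC$ and $BD$ are defined), the associativity $(A\otimes B)\otimes C=A\otimes(B\otimes C)$, and $I_a\otimes I_b=I_{ab}$, combined with the defining formula in Definition~\ref{stp}.

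For 2), since $A\in\mathscr{R}^{p\times 1}$ has a single column, $\operatorname{lcm}(1,m)=m$, so unwinding the definition gives $A\ltimes B=(A\otimes I_m)(B\otimes I_1)=(A\otimes I_m)B=(A\cdot 1)\otimes(I_mB)=A\otimes B$. On the other side, $I_p\otimes B\in\mathscr{R}^{pm\times pn}$ while $A\in\mathscr{R}^{p\times 1}$ with $\operatorname{lcm}(pn,p)=pn$, so $(I_p\otimes B)\ltimes A=(I_p\otimes B)(A\otimes I_n)=(I_pA)\otimes(BI_n)=A\otimes B$. The two sides coincide, proving 2).

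For 3), the same computation (now with $B\in\mathscr{R}^{d\times 1}$) gives $B\ltimes A=B\otimes A$ and $A\ltimes B=A\otimes B$; since $W_{[p,d]}$ is square of order $pd$ and $A\otimes B$ has $pd$ rows, the right-hand side is an ordinary matrix product, so 3) reduces to the swap identity $W_{[p,d]}(A\otimes B)=B\otimes A$. I would first check this on basis vectors, i.e.\ $W_{[p,d]}(\delta_p^i\otimes\delta_d^j)=\delta_d^j\otimes\delta_p^i$, which is immediate from the block form $W_{[p,d]}=[I_d\otimes\delta_p^1,\dots,I_d\otimes\delta_p^p]$ together with $(I_d\otimes\delta_p^i)\delta_d^j=\delta_d^j\otimes\delta_p^i$, and then extend to arbitrary column vectors $A=\sum_i a_i\delta_p^i$, $B=\sum_j b_j\delta_d^j$ by bilinearity of $\otimes$.

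Part 1) is the substantive one. The strategy is to expand both $(A\ltimes B)\ltimes C$ and $A\ltimes(B\ltimes C)$ through the definition and then, repeatedly applying the mixed-product rule and $I_a\otimes I_b=I_{ab}$ to merge adjacent Kronecker-padded factors, collapse each grouping into the same three-factor form
\[(A\otimes I_{s/n})(B\otimes I_{s/p})(C\otimes I_{s/r})\]
for a suitable common multiple $s$ of $n$, $p$, $r$ (and of the intermediate column sizes). The content of the argument is almost entirely bookkeeping of the identity-block sizes: one must verify that the padding factors generated by the nested least common multiples in the two orders of association genuinely agree and that every intermediate product is conformable, which boils down to elementary number-theoretic identities among those least common multiples. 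I expect this dimension reconciliation --- not any deep idea --- to be the main obstacle; it is routine but easy to get wrong. If the direct reconciliation becomes unwieldy, I would fall back on the standard device of first proving associativity in the ``divisor'' cases (where one of $n,p$ divides the other, and likewise at the second product) and then deducing the general case.
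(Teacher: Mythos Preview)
The paper does not supply its own proof of this proposition; it is quoted verbatim from \cite{chengdz2011springer} as background material, so there is nothing to compare against on the paper's side. Your arguments for 2) and 3) are correct and essentially the standard ones. One small point in 3): when you collapse $W_{[p,d]}\ltimes A\ltimes B$ to the ordinary product $W_{[p,d]}(A\otimes B)$ you are implicitly bracketing as $(W_{[p,d]}\ltimes A)\ltimes B$; it is worth spelling out that $W_{[p,d]}\ltimes A=W_{[p,d]}(A\otimes I_d)$ and then $(W_{[p,d]}(A\otimes I_d))\ltimes B=W_{[p,d]}(A\otimes I_d)B=W_{[p,d]}(A\otimes B)$, so that you are not tacitly invoking 1) before it is proved.

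For 1), your plan is the standard one and is correct in outline, but be aware that the ``common multiple $s$'' you end up with on the two sides need not be the same number: the nested $\operatorname{lcm}$'s from the two bracketings are in general different, and the three-factor expressions you obtain differ by a further right-tensoring with an identity block. The clean way to finish is exactly the fallback you mention: show that if $(A\otimes I_a)(B\otimes I_b)(C\otimes I_c)$ is defined then so is $(A\otimes I_{ka})(B\otimes I_{kb})(C\otimes I_{kc})$ and the latter equals the former tensored with $I_k$; this lets you pass from each bracketing to the version with $s=\operatorname{lcm}(n,p,q,r)$ (or any common refinement), where the two expressions visibly coincide. With that lemma in hand the bookkeeping is painless.
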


\begin{lemma}[\cite{chengdz2011springer}]\label{lemma-reducing matrix}
Let $\mathbf{x}=\ltimes_{i=1}^n \mathbf{x}_i=\mathbf{x}_1\ltimes \mathbf{x}_2\ltimes\ldots\ltimes \mathbf{x}_n$ with $\mathbf{x}_i\in\Delta_2, i\in[1,n]$, then it holds that $\mathbf{x}^2=\Phi_n \mathbf{x}$ with $\Phi_n=\Pi_{i=1}^n I_{2^{i-1}}\otimes [(I_2 \otimes W_{[2,2^{n-i}]})M_r]$, where $M_r$ denotes the power-reducing matrix satisfying $\mathbf{x}_i^2=M_r \mathbf{x}_i$.
\end{lemma}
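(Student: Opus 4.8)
The plan is a direct induction that walks through the factors of $\Phi_n$ one at a time. Write $\mathbf{x}=\mathbf{x}_1\ltimes\cdots\ltimes\mathbf{x}_n$ and set $C_i:=I_{2^{i-1}}\otimes[(I_2\otimes W_{[2,2^{n-i}]})M_r]$, so that $\Phi_n=C_1\ltimes C_2\ltimes\cdots\ltimes C_n$ and $\Phi_n\mathbf{x}=C_1\ltimes(C_2\ltimes(\cdots(C_n\ltimes\mathbf{x})))$. I would prove, by downward induction on $k=n,n-1,\dots,1$, the invariant
\begin{equation*}
C_k\ltimes C_{k+1}\ltimes\cdots\ltimes C_n\ltimes\mathbf{x}=\mathbf{x}_1\ltimes\cdots\ltimes\mathbf{x}_{k-1}\ltimes(\mathbf{x}_k\ltimes\cdots\ltimes\mathbf{x}_n)^2,
\end{equation*}
from which the case $k=1$ gives exactly $\Phi_n\mathbf{x}=\mathbf{x}^2$.

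For the base case $k=n$ the swap matrix degenerates, $W_{[2,1]}=I_2$, so $C_n=I_{2^{n-1}}\otimes M_r$; applying Proposition~\ref{proposition-properties}(2) to pull the $2^{n-1}$-dimensional vector $\mathbf{x}_1\ltimes\cdots\ltimes\mathbf{x}_{n-1}$ through the $I_{2^{n-1}}$ block, and then the defining relation $M_r\mathbf{x}_n=\mathbf{x}_n^2$, gives $C_n\ltimes\mathbf{x}=\mathbf{x}_1\ltimes\cdots\ltimes\mathbf{x}_{n-1}\ltimes\mathbf{x}_n^2$, which is the asserted form.

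For the inductive step, suppose the invariant holds for $k+1$ and set $\mathbf{u}:=\mathbf{x}_{k+1}\ltimes\cdots\ltimes\mathbf{x}_n\in\Delta_{2^{n-k}}$, so the vector is $\mathbf{x}_1\ltimes\cdots\ltimes\mathbf{x}_k\ltimes\mathbf{u}^2$. Applying $C_k$: Proposition~\ref{proposition-properties}(2) moves $\mathbf{x}_1\ltimes\cdots\ltimes\mathbf{x}_{k-1}$ past the $I_{2^{k-1}}$ factor, so it remains to check that $(I_2\otimes W_{[2,2^{n-k}]})M_r$ applied to $\mathbf{x}_k\ltimes\mathbf{u}^2$ equals $(\mathbf{x}_k\ltimes\mathbf{u})^2$. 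Here $M_r\mathbf{x}_k=\mathbf{x}_k\ltimes\mathbf{x}_k$; Proposition~\ref{proposition-properties}(2) pulls the first $\mathbf{x}_k$ out of the $I_2$ factor, leaving $\mathbf{x}_k\ltimes W_{[2,2^{n-k}]}\ltimes\mathbf{x}_k\ltimes\mathbf{u}^2$; and finally, writing $\mathbf{u}^2=\mathbf{u}\ltimes\mathbf{u}$, regrouping by associativity (Proposition~\ref{proposition-properties}(1)), and applying Proposition~\ref{proposition-properties}(3) with $p=2$, $d=2^{n-k}$ to get $W_{[2,2^{n-k}]}\ltimes\mathbf{x}_k\ltimes\mathbf{u}=\mathbf{u}\ltimes\mathbf{x}_k$, we obtain $\mathbf{x}_k\ltimes\mathbf{u}\ltimes\mathbf{x}_k\ltimes\mathbf{u}=(\mathbf{x}_k\ltimes\mathbf{u})^2$, as needed. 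This closes the induction, and at $k=1$ it yields $\Phi_n\mathbf{x}=(\mathbf{x}_1\ltimes\cdots\ltimes\mathbf{x}_n)^2=\mathbf{x}^2$.

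The main obstacle is not conceptual but the dimension bookkeeping the STP forces: at each step one must track the exact sizes of all factors — in particular that $\mathbf{u}^2$ lives in $\Delta_{2^{2(n-k)}}$ rather than $\Delta_{2^{n-k}}$, so Proposition~\ref{proposition-properties}(3), which is stated for a single $2^{n-k}$-dimensional second argument, can only be invoked after using associativity to split off the first copy of $\mathbf{u}$. Once the role of each $C_i$ is identified (duplicate $\mathbf{x}_i$ via $M_r$, then commute the copy past the block $\mathbf{x}_{i+1}\ltimes\cdots\ltimes\mathbf{x}_n$ via the swap matrix), the computation is just repeated use of Proposition~\ref{proposition-properties}.
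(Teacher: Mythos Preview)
Your induction is correct: the base case and the inductive step both go through exactly as you describe, and the dimension bookkeeping you flag (splitting $\mathbf{u}^2=\mathbf{u}\ltimes\mathbf{u}$ before invoking the swap identity) is handled properly. There is nothing to fix.

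Note, however, that the paper does not prove this lemma at all: it is quoted directly from \cite{chengdz2011springer} as a standard identity in the STP calculus, so there is no ``paper's own proof'' to compare against. Your argument is essentially the canonical one --- each factor $C_i$ duplicates $\mathbf{x}_i$ via $M_r$ and then shuttles the copy past the tail $\mathbf{x}_{i+1}\ltimes\cdots\ltimes\mathbf{x}_n$ via the swap matrix --- and is exactly how this formula is derived in the source literature.
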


\begin{definition}[\cite{RaoSolutions}]
Given two matrices $A\in \mathscr{R}^{m \times n}$ and $B\in \mathscr{R}^{w\times n}$, the Khatri-Rao product of $A$ and $B$, termed as $A\ast B$, is denoted by
\begin{equation*}
A\ast B=[\mathrm{Col}_1(A)\ltimes \mathrm{Col}_1(B),\cdots,\mathrm{Col}_n(A)\ltimes \mathrm{Col}_n(B)]\in \mathscr{R}^{mw \times n}.
\end{equation*}
\end{definition}

Denote a bijection $\sigma : \mathscr{D}\rightarrow \Delta_2$ as $\sigma(x)=\delta_2^{2-x}=\mathbf{x}$. Then, based on the STP of matrices, the algebraic representation for logical function $f:\mathscr{D}^n\rightarrow\mathscr{D}$ can be also derived.

\begin{lemma}[\cite{chengdz2011springer}]\label{lemma-structure matrix}
For a logical function $f(x_1, x_2,\ldots,x_n):\mathscr{D}^n\rightarrow \mathscr{D}$, its algebraic form $\mathbf{f}(\mathbf{x}_1, \mathbf{x}_2,\ldots,\mathbf{x}_n): (\Delta_2)^n\rightarrow \Delta_2$ satisfying
\begin{equation}
\mathbf{f}(\mathbf{x}_1, \mathbf{x}_2,\ldots,\mathbf{x}_n)=H_f \ltimes_{i=1}^n \mathbf{x}_i,
\end{equation}
where $H_f\in\mathscr{L}^{2\times 2^n}$ is unique and is called the structure matrix of logical function $f$.
\end{lemma}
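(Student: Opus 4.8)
The plan is to build $H_f$ explicitly, one column at a time, and then derive uniqueness from the fact that the STP products $\ltimes_{i=1}^n\mathbf{x}_i$ exhaust the canonical basis $\Delta_{2^n}$.

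First I would record the auxiliary fact that the map $(x_1,\ldots,x_n)\in\mathscr{D}^n\mapsto\ltimes_{i=1}^n\sigma(x_i)$ is a bijection onto $\Delta_{2^n}$. Since every $\sigma(x_i)=\delta_2^{2-x_i}$ is a column vector, Definition \ref{stp} makes the STP of these vectors coincide with the ordinary Kronecker product, so a short induction on $n$ gives $\ltimes_{i=1}^n\sigma(x_i)=\delta_{2^n}^{k}$ with $k=1+\sum_{i=1}^n(1-x_i)2^{n-i}$. As $(x_1,\ldots,x_n)$ runs over $\mathscr{D}^n$, this $k$ runs over $[1,2^n]$ bijectively (it is the usual binary-expansion bookkeeping), which proves the claim.

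Next I would define $H_f\in\mathscr{R}^{2\times 2^n}$ by prescribing its $k$-th column: letting $(x_1^k,\ldots,x_n^k)$ be the unique tuple with $\ltimes_{i=1}^n\sigma(x_i^k)=\delta_{2^n}^k$, set $\mathrm{Col}_k(H_f):=\sigma(f(x_1^k,\ldots,x_n^k))=\delta_2^{2-f(x_1^k,\ldots,x_n^k)}$. Because $f$ takes values in $\mathscr{D}=\{0,1\}$, each column of $H_f$ lies in $\Delta_2$, so $H_f\in\mathscr{L}^{2\times 2^n}$. To check the identity, fix any $(x_1,\ldots,x_n)$, let $\delta_{2^n}^k$ be the corresponding basis vector, and use that right-multiplying a matrix by $\delta_{2^n}^k$ extracts its $k$-th column: $H_f\ltimes_{i=1}^n\sigma(x_i)=H_f\delta_{2^n}^k=\mathrm{Col}_k(H_f)=\sigma(f(x_1,\ldots,x_n))=\mathbf{f}(\mathbf{x}_1,\ldots,\mathbf{x}_n)$, which is exactly the stated equation.

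Finally, for uniqueness, suppose $H'\in\mathscr{L}^{2\times 2^n}$ also satisfies $\mathbf{f}(\mathbf{x}_1,\ldots,\mathbf{x}_n)=H'\ltimes_{i=1}^n\mathbf{x}_i$ for every input tuple. Evaluating at the tuple attached to $\delta_{2^n}^k$ forces $\mathrm{Col}_k(H')=H'\delta_{2^n}^k=\sigma(f(x_1^k,\ldots,x_n^k))=\mathrm{Col}_k(H_f)$ for all $k\in[1,2^n]$, so $H'=H_f$. The only mildly delicate step is the first one, namely pinning down the index in $\ltimes_{i=1}^n\sigma(x_i)=\delta_{2^n}^k$ and confirming the assignment is bijective; once the column-vector STP is recognized as the Kronecker product this is routine, and everything afterward is immediate.
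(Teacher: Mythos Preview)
Your argument is correct and complete. The identification of the STP of column vectors with the Kronecker product, the explicit index formula $k=1+\sum_{i=1}^n(1-x_i)2^{n-i}$, the column-by-column construction of $H_f$, and the uniqueness argument via evaluation at each $\delta_{2^n}^k$ are all sound.

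As for comparison: the paper does not actually prove this lemma. It is quoted as a known result from \cite{chengdz2011springer} and used without justification, which is standard practice in the STP literature since this representation theorem is foundational and well established. So there is no ``paper's own proof'' to compare against; your write-up simply supplies the elementary argument that the cited reference contains.
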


According to Lemma \ref{lemma-structure matrix}, the structure matrix of logical function $f^\kappa_i,i\in [1,n]$ can be expressed as $F^\kappa_i\in \mathscr{L}^{2\times 2^{|\mathcal{N}_{i,\kappa}|}}$. Let $\widehat{F}_i^\kappa\in \mathscr{L}^{2\times 2^{|\mathcal{N}_{i}|}}$ be its extended structure matrix such that $\widehat{F}_i^\kappa \ltimes_{j\in \mathcal{N}_i}\mathbf{x}_j(t)=F_i^\kappa \ltimes_{j\in \mathcal{N}_{i,\kappa}}\mathbf{x}_j(t)$.  Then on the basis of the STP method, the algebraic representation of PBN (\ref{equation-PBN}) reads:
\begin{equation}\left\{\begin{array}{l}\label{equation-PBN-a1}
\mathbf{x}_1(t+1)=F_1\ltimes_{j\in \mathcal{N}_1}\mathbf{x}_j(t),\\
\mathbf{x}_2(t+1)=F_2\ltimes_{j\in \mathcal{N}_2}\mathbf{x}_j(t),\\
~~~~~~~~~~~~\vdots\\
\mathbf{x}_n(t+1)=F_n\ltimes_{j\in \mathcal{N}_n}\mathbf{x}_j(t),
\end{array}\right.
\end{equation}
where $F_i, i\in[1,n]$ is chosen from matrix set $\{\widehat{F}_i^\kappa\in \mathscr{L}^{2\times 2^{|\mathcal{N}_i|}}~|~\kappa\in [1,l_i]\}$ subject to the probability distribution $\{0\leq p_i^\kappa\leq1~|~\kappa\in [1,l_i]\}$. Furthermore, the mathematical expectation of $\mathbf{x}_i(t+1),i\in[1,n]$ is defined as
\begin{equation}\label{equation-PBN-algebraic}
\mathbb{E}\{\mathbf{x}_i(t+1)\}=\mathbb{E}\{F_i\ltimes_{j\in \mathcal{N}_i}\mathbf{x}_j(t)\}=\widehat{F}_i\mathbb{E}\{\ltimes_{j\in \mathcal{N}_i}\mathbf{x}_j(t)\},
\end{equation}
where $\widehat{F}_i=\sum_{\kappa=1}^{l_i}p_i^\kappa \widehat{F}_i^\kappa$, and $\mathbb{E}\{\cdot\}$ represents the mathematical expectation throughout this paper.

\begin{remark}
Reviewing the existing literatures \cite{liht2017siam,lu2018stabilization} and \cite{guoyq}, many fundamental results on stability, controllability and observability are generally based on state transition matrix $L\in \mathscr{P}^{2^n\times 2^n}$, which can be obtained by the following process:
\begin{itemize}
  \item Construct a set of matrices $\Upsilon_i \in \mathscr{P}^{2^{|\mathcal{N}_i|} \times {2^n}}, i\in [1,n]$, such that $\mathbb{E}\{\ltimes_{j\in \mathcal{N}_i}\mathbf{x}_j(t)\}=\Upsilon_i\mathbb{E}\{\ltimes_{j=1}^{n}\mathbf{x}_j(t)\}$.
  \item Obtain the corresponding augmented system:
\begin{equation}\begin{aligned}\label{equation-PBN-a2}
\mathbb{E}\{\mathbf{x}(t+1)\}=&\ltimes_{i=1}^n \widehat{F}_i\mathbb{E}\{\ltimes_{j\in \mathcal{N}_i}\mathbf{x}_j(t)\}\\
=&\ltimes_{i=1}^n \widehat{F}_i\Upsilon_i\mathbb{E}\{\ltimes_{j=1}^{n}\mathbf{x}_j(t)\}\\
\triangleq&L\mathbb{E}\{\mathbf{x}(t)\},
\end{aligned}
\end{equation}
where $L=(\widehat{F}_1\Upsilon_1)\ast \cdots \ast (\widehat{F}_n\Upsilon_n)\in \mathscr{P}^{2^n\times 2^n}$ is called the state transition matrix.
\end{itemize}
On the one hand, some control-related problems can be easily handled in virtue of the converted form (\ref{equation-PBN-a2}). On the other hand, the scale of matrix $L$ expands rapidly with the increase of network nodes. It reveals that designing controllers based on state transition matrix causes high computational complexity.
\end{remark}

\section{Pinning Control Design for Stabilizing PBNs}\label{section-further stability}
In this section, we investigate the stability criterion for PBN (\ref{equation-PBN}) based on network structure rather than state transition matrix.
The relationship between global stability of a prespecified PBN and the acyclic structure of its wiring digraph is derived based on the following Lemma \ref{lemma-robert}.
\begin{lemma} [\cite{discrete2012robert}]\label{lemma-robert}
A BN is globally stable, if its wiring digraph has no cycle.
\end{lemma}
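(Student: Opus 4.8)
The plan is to exploit the fact that an acyclic wiring digraph admits a topological ordering, under which the Boolean network becomes a feed-forward cascade whose state components get frozen one after another, so that every trajectory collapses onto a single point in at most $n$ steps.

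First I would use the acyclicity of $\mathbf{G}=(\mathbf{V},\mathbf{E})$ to fix a topological labelling of the nodes; after relabelling we may assume $\mathcal{N}_i\subseteq[1,i-1]$ for every $i\in[1,n]$, so that $f_i$ depends only on state components with strictly smaller index. In particular $\mathcal{N}_1=\emptyset$, hence $f_1$ is a constant Boolean function, say $f_1\equiv c_1$, so that $x_1(t)=c_1$ for every $t\geq 1$ and every initial condition. I would then prove by induction on $i$ that there are constants $c_i\in\mathscr{D}$ with $x_i(t)=c_i$ for all $t\geq i$, along every trajectory. The base case $i=1$ is the observation just made. For the inductive step, assume $x_j(t)=c_j$ for all $t\geq j$ and all $j<i$; since every $j\in\mathcal{N}_i$ satisfies $j\leq i-1$, for $t\geq i-1$ all arguments of $f_i$ have already reached their frozen values, whence $x_i(t+1)=f_i([c_j]_{j\in\mathcal{N}_i})=:c_i$ for all $t\geq i-1$, i.e. $x_i(t)=c_i$ for $t\geq i$. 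Consequently, for $t\geq n$ every trajectory satisfies $x(t)=x^\ast:=(c_1,\ldots,c_n)$.

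Finally I would verify the three requirements in Definition \ref{global-stability}. The point $x^\ast$ is a steady state, because $x(n)=x^\ast$ forces $x_i(n+1)=f_i([x_j^\ast]_{j\in\mathcal{N}_i})=x_i^\ast$ for all $i$; it is the unique steady state, because any fixed point $\bar{x}$, used as an initial condition, generates a constant trajectory which by the previous paragraph equals $x^\ast$ from time $n$ on, so $\bar{x}=x^\ast$; and no $s$-length cyclic attractor with $s\geq 2$ can exist, because any state $x(t_0)$ lying on such an attractor would satisfy $x(t_0)=x(t_0+ks)$ for all $k$, hence $x(t_0)=x^\ast$ by taking $ks\geq n$, contradicting the requirement $x(t_0+1)\neq x(t_0)$. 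Thus the BN is globally stable, with the unique attractor $x^\ast$ reached from every state within at most $n$ steps.

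The only delicate point — and it is bookkeeping rather than a genuine obstacle — is to keep the relabelling and the stabilization times synchronized, i.e. to make sure that when $x_i$ is updated at time $t\geq i-1$ every predecessor $j\in\mathcal{N}_i$ has already stabilized, which is precisely what the topological order $j\leq i-1\leq t$ guarantees. An entirely equivalent alternative would be to pass to the algebraic setting of Lemma \ref{lemma-structure matrix}: after the same relabelling the associated transition matrix becomes block lower-triangular, and one checks that its $n$-fold STP power has all columns equal to the vector form of $x^\ast$. I would prefer the combinatorial argument above for its transparency.
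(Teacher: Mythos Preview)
Your argument is correct and is precisely the classical topological-ordering proof. Note, however, that the paper does not actually supply its own proof of Lemma~\ref{lemma-robert}: it is quoted directly from \cite{discrete2012robert} (Robert's \emph{Discrete Iterations}) and used as a black box. The argument you wrote is essentially Robert's original proof, so there is nothing to compare---you have reconstructed the cited result rather than diverged from the paper.
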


It implies that in order to recover the global stability of BNs, we need to reconstruct an acyclic wiring digraph by deleting certain edges of original wiring diagraph.
This action can be achieved by imposing feasible control strategy on certain nodes as indicated in \cite{zhong2020arxiv} and \cite{zhusy2020arxiv}. Hereafter, we prove that Lemma \ref{lemma-robert} also holds for the case of PBNs.

\begin{proposition}\label{proposition-pbn}
If the wiring digraph of PBN (\ref{equation-PBN}) is acyclic, the wiring digraph of each possible model is also acyclic.
\end{proposition}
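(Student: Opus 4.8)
The plan is to show that the edge set of the wiring digraph of any realizable model $\Sigma_\lambda$ is a subset of the edge set $\mathbf{E}$ of the wiring digraph $\mathbf{G}$ of the whole PBN. Once that containment is established, the conclusion is immediate: any cycle in the wiring digraph of $\Sigma_\lambda$ would be a sequence of edges all lying in $\mathbf{E}$, hence a cycle in $\mathbf{G}$, contradicting the hypothesis that $\mathbf{G}$ is acyclic.

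First I would fix $\lambda$ and, for each node $x_i$, denote by $f_i^{\lambda_i}\in\mathscr{F}_i$ the logical function selected for node $x_i$ in model $\Sigma_\lambda$; since $\Sigma_\lambda$ is a \emph{possible} model, each $f_i^{\lambda_i}$ has positive activation probability $p_i^{\lambda_i}>0$. Next I would recall from Definition \ref{functional-variable} and the paragraph following it that the in-neighbors of $x_i$ in the model-$\lambda$ digraph are exactly the functional variables of $f_i^{\lambda_i}$, i.e. the indices in $\mathcal{N}_{i,\lambda_i}$. Then I would invoke Definition \ref{wiring-digraph}: an edge $x_j\to x_i$ belongs to $\mathbf{E}$ precisely when $x_j$ is a functional variable of \emph{some} $f_i^\kappa\in\mathscr{F}_i$ with $p_i^\kappa>0$. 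Taking $\kappa=\lambda_i$ shows that every edge $x_j\to x_i$ present in the model-$\lambda$ digraph (namely each $j\in\mathcal{N}_{i,\lambda_i}$, with $p_i^{\lambda_i}>0$) is also an edge of $\mathbf{G}$. Ranging over all $i\in[1,n]$ gives the desired edge containment.

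Finally I would close the argument by contradiction: suppose the wiring digraph of $\Sigma_\lambda$ contained a cycle $x_{i_1}\to x_{i_2}\to\cdots\to x_{i_k}\to x_{i_1}$; by the containment just proved, each of these edges lies in $\mathbf{E}$, so the same sequence is a cycle in $\mathbf{G}$, contradicting acyclicity of $\mathbf{G}$. Hence the wiring digraph of each possible model is acyclic.

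I do not anticipate a genuine obstacle here; the statement is essentially a bookkeeping consequence of the definitions. The only point requiring a little care is the direction of the set inclusion — it is the edges of the \emph{submodel} that sit inside the edges of the \emph{full} PBN, not the other way around — and the fact that ``possible model'' is exactly what guarantees $p_i^{\lambda_i}>0$, which is needed to match the ``positive probability'' clause in Definition \ref{wiring-digraph}. If one wanted to be fully rigorous about ``$\mathcal{N}_{i,\lambda_i}$ is the in-neighbor set of $x_i$ in model $\lambda$,'' one could note that this is precisely how the simplified form of each submodel is defined (removing non-functional variables), so no extra work is needed.
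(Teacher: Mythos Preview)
Your proposal is correct and follows essentially the same argument as the paper: both establish that every edge of a possible model's wiring digraph is an edge of the PBN's wiring digraph (via the positive-probability clause in Definition~\ref{wiring-digraph}), and then conclude by contradiction that a cycle in any submodel would force a cycle in $\mathbf{G}$. Your write-up is somewhat more explicit about the role of $p_i^{\lambda_i}>0$, but the strategy is identical.
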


\begin{proof}
The probability of the $\lambda$-th model $\Sigma_\lambda=\{f_1^{\lambda_1},f_2^{\lambda_2},\cdots,f_n^{\lambda_n}\}$ being active is $P_\lambda=\Pi_{i=1}^{n}p_i^{\lambda_i}$. Then, for node $x_i,i\in[1,n]$, if $x_j$ is the functional variable of $f_i^{\lambda_i}$ with probability $p_i^{\lambda_i}$, there is an edge $x_j\rightarrow x_i$ in the wiring digraph of the $\lambda$-th model, and this edge inevitably exists in the wiring digraph of PBN (\ref{equation-PBN}). Thereby, if there is a cycle in one of $\Pi_{i=1}^{n}l_{i}$ possible models, this cycle must exist in PBN  (\ref{equation-PBN}), which conflicts with the hypothesis.
\end{proof}

\begin{theorem}
PBN (\ref{equation-PBN}) is globally stable, if its wiring digraph is acyclic and all possible models are stabilized at the same steady state.
\end{theorem}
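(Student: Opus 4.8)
\textit{Proof proposal.} The plan is to use acyclicity to produce a single topological order of the $n$ nodes that is simultaneously valid for every possible model, and then to push the common steady state forward along this order, proving that every trajectory of the PBN reaches that steady state within $n$ steps no matter which model is active at each step.

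Let $x^\ast\in\mathscr{D}^n$ denote the common steady state at which all possible models are stabilized. By Proposition~\ref{proposition-pbn}, the wiring digraph of each model $\Sigma_\lambda$ with $P_\lambda>0$ is a subgraph of the (acyclic) wiring digraph of PBN~(\ref{equation-PBN}); hence a single topological ordering $x_{i_1},\ldots,x_{i_n}$ works for all of them, i.e., the in-neighbors of $x_{i_k}$ lie in $\{x_{i_1},\ldots,x_{i_{k-1}}\}$ for every $k$. The crucial observation is that, for any model $\Sigma_\lambda$ with $P_\lambda>0$, every functional variable of $f_{i_k}^{\lambda_{i_k}}$ is an in-neighbor of $x_{i_k}$ in the wiring digraph of the PBN; since the model is stabilized at $x^\ast$ we have $f_{i_k}^{\lambda_{i_k}}(x^\ast)=x_{i_k}^\ast$, and therefore evaluating $f_{i_k}^{\lambda_{i_k}}$ at \emph{any} state that coincides with $x^\ast$ on those in-neighbor coordinates returns $x_{i_k}^\ast$, independently of $\lambda$.

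Next I would induct on $k$ to show that, for an arbitrary initial state and an arbitrary switching sequence occurring with positive probability, $x_{i_k}(t)=x_{i_k}^\ast$ holds for all $t\ge k$. The base case $k=1$ is immediate: $x_{i_1}$ has no in-neighbor, so in each admissible model $f_{i_1}^{\lambda_{i_1}}$ is a constant function whose value must be $x_{i_1}^\ast$. For the inductive step, once $x_{i_1},\ldots,x_{i_{k-1}}$ have been frozen to their $x^\ast$-values by time $k-1$, the observation above forces $x_{i_k}(t+1)=x_{i_k}^\ast$ for every $t\ge k-1$, whatever model is chosen. It follows that $x(t)=x^\ast$ for all $t\ge n$, with probability one.

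It then remains to read off the three requirements of Definition~\ref{global-stability}. Starting the trajectory at $x^\ast$ gives $P\{x(t+1)=x(t)\}=1$, so $x^\ast$ is a steady state; if $y$ were another steady state, applying the convergence result to the trajectory started at $y$ yields $y=x^\ast$, so the steady state is unique; and no $s$-length cyclic attractor with $s\ge2$ can exist, because every trajectory is absorbed at $x^\ast$ after at most $n$ steps. I expect the only place requiring care to be the inductive step --- namely, checking that the update of $x_{i_k}$ sees only coordinates already equal to their $x^\ast$-values and that it returns $x_{i_k}^\ast$ for \emph{every} model that can be active --- which is precisely where acyclicity (through Proposition~\ref{proposition-pbn}) and the shared-steady-state hypothesis both enter; everything else is bookkeeping.
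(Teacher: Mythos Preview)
Your proof is correct and more explicit than the paper's. The paper's argument invokes Proposition~\ref{proposition-pbn} to conclude that each model's wiring digraph is acyclic, applies Lemma~\ref{lemma-robert} (Robert's theorem) to conclude that each deterministic model is globally stable, and then asserts that the common-steady-state hypothesis excludes multiple attractors. You instead unfold the proof of Lemma~\ref{lemma-robert}: rather than treating Robert's result as a per-model black box, you fix a single topological order valid for all models simultaneously and run the layered induction directly, freezing coordinates one at a time and checking that each freeze occurs \emph{regardless of which model is active at that step}. This explicitly handles the stochastic switching --- the passage from ``each deterministic constituent converges to $x^\ast$'' to ``the PBN, which may switch constituents at every step, also converges to $x^\ast$'' is not automatic for general switched systems, and the paper leaves it to the reader. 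Your route buys rigor and a quantitative bound (convergence in at most $n$ steps, with probability one) at the cost of length; the paper's buys brevity by leaning on exactly the argument you have spelled out.
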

\begin{proof}
If the wiring digraph of PBN (\ref{equation-PBN}) is acyclic, then according to Proposition \ref{proposition-pbn}, each possible model has no cycle. Moreover, in the light of Lemma \ref{lemma-robert}, each possible model is globally stable. Furthermore, in order to avoid multiple attractors, it should guarantee that all possible models are stable at the same steady state.
\end{proof}

In what follows, we design two kinds of state feedback controllers, that is, uniform one (independent of switching signal) and non-uniform one (related to each possible model), to stabilize PBN (\ref{equation-PBN}), before which the method of selecting pinning nodes is presented.
\subsection{Selecting Pinning Nodes}
Initially, we introduce the concept of feedback arc set.
\begin{definition}[Feedback Arc Set \cite{bang2008algorithems}]\label{feedback-arc-set}
Feedback arc set (FAS) is a subset of edges, deleting which the wiring diagraph possesses acyclic structure. In other words, an FAS contains all fixed points and at least one edge of each cycle.
\end{definition}

Fortunately, the fixed points and cycles existing in wiring diagraph $\mathbf{G}=(\mathbf{V},\mathbf{E})$ can be directly determined by the depth-first search algorithm. Accordingly, denote a possible FAS by $\{e_1,\cdots,e_\varpi\}\subseteq \mathbf{E}$. For edges $e_1,\cdots,e_\varpi \in \mathbf{E}$, their corresponding starting nodes and ending nodes are respectively denoted by $v_{-}(e_1),\cdots,v_{-}(e_\varpi)\in \mathbf{V}$ and $v_{+}(e_1),\cdots,v_{+}(e_\varpi)\in \mathbf{V}$. With respect to the subgraph
$\mathbf{G}_i=(\mathbf{V}_i,\mathbf{E}_i)\subseteq\mathbf{G},i\in [1,n]$ induced by $x_i$, it reveals that $\bigcup_{e\in\mathbf{E}_i}v_{-}(e)=\bigcup_{\kappa=1}^{l_i}\bigcup_{j\in\mathcal{N}_{i,\kappa}}x_j(t)$ and $\bigcup_{e\in\mathbf{E}_i}v_{+}(e)=x_i(t)$. Moreover, subgraph $\mathbf{G}_i$ has the property that each starting node is connected by a certain edge, but each ending node may be connected by several different edges.

To proceed, we suppose that $\bigcup_{j=1}^{\varpi}v_{+}(e_j)=\{x_{\gamma_1},x_{\gamma_2},\cdots,x_{\gamma_\tau}\}\subseteq \{x_1,\cdots,x_n\}$, then it has
\begin{equation*}
\bigcup_{j=\omega_1^1}^{\omega_1^{\alpha_1}}v_{+}(e_j)=x_{\gamma_1},
\bigcup_{j=\omega_2^1}^{\omega_2^{\alpha_2}}v_{+}(e_j)=x_{\gamma_2},\cdots,
\bigcup_{j=\omega_\tau^1}^{\omega_\tau^{\alpha_\tau}}v_{+}(e_j)=x_{\gamma_\tau},
\end{equation*}
where $\{\omega_1^1,\cdots,\omega_1^{\alpha_1},\cdots,\omega_\tau^1,\cdots,\omega_\tau^{\alpha_\tau}\}=[1,\varpi]$ with $\omega_1^1<\cdots<\omega_1^{\alpha_1}<\cdots<\omega_\tau^1<\cdots<\omega_\tau^{\alpha_\tau}$. Correspondingly, the starting nodes are classified as
\[\bigcup_{j=\omega_1^1}^{\omega_1^{\alpha_1}}v_{-}(e_j)=\{x_{\pi_1^1},\cdots,x_{\pi_1^{\alpha_1}}\},
\bigcup_{j=\omega_2^1}^{\omega_2^{\alpha_2}}v_{-}(e_j)=\{x_{\pi_2^1},\cdots,x_{\pi_2^{\alpha_2}}\}, \]
\[\cdots,\bigcup_{j=\omega_\tau^1}^{\omega_\tau^{\alpha_\tau}}v_{-}(e_j)=\{x_{\pi_\tau^1},\cdots,x_{\pi_\tau^{\alpha_\tau}}\},\]
where $\pi_1^1,\cdots,\pi_1^{\alpha_1},\cdots,\pi_\tau^1,\cdots,\pi_\tau^{\alpha_\tau}$ with $\sum_{j=1}^\tau \alpha_j=\varpi$.

Afterwards, we reconstruct the acyclic wiring digraph based on the obtained FAS $\{e_1,\cdots,e_\varpi\}$, edges in which connect starting nodes $x_{\pi^1_1},\cdots,x_{\pi_1^{\alpha_1}}$ $,\cdots,x_{\pi^1_\tau},\cdots,x_{\pi_\tau^{\alpha_\tau}}\in \mathbf{V}$ and ending nodes $x_{\gamma_1},\cdots,x_{\gamma_\tau}\in \mathbf{V}$. It should be pointed out that all ending nodes are chosen to be pinning nodes, then the first group of pinning nodes (to achieve stabilization) are $x_{\gamma_1},\cdots,x_{\gamma_\tau}$, whose subscript indices are collected into set $\Lambda$.
For each pinning node $x_{\gamma_\sigma},\sigma\in[1,\tau]$, the subscript indices of its starting nodes are collected into set $\mathcal{N}^\star_{\gamma_\sigma}=\{\pi_\sigma^1,\cdots,\pi_\sigma^{\alpha_\sigma}\}$, and denote $\mathcal{N}^\circ_{\gamma_\sigma}=\mathcal{N}_{\gamma_\sigma}\backslash\mathcal{N}^\star_{\gamma_\sigma}$. Then, there exists a pair of probability matrices $\mathbf{T}_{\gamma_\sigma} \in \mathscr{P}^{2\times 2^{|\mathcal{N}_{\gamma_\sigma}|}}$ and $G_{\gamma_\sigma} \in \mathscr{P}^{2\times 2^{|\mathcal{N}^\circ_{\gamma_\sigma}|}}$ such that
\begin{equation}\label{stable-struture-matrix-0}
\mathbf{T}_{\gamma_\sigma}=G_{\gamma_\sigma}(I_{2^{|\mathcal{N}^\circ_{\gamma_\sigma}|}}\otimes \mathbf{1}^\top_{2^{|\mathcal{N}^\star_{\gamma_\sigma}|}}).
\end{equation}
\begin{remark}
Notice that there are many available matrix pairs ($\mathbf{T}_{\gamma_\sigma}$ and $G_{\gamma_\sigma}$) satisfying equations (\ref{stable-struture-matrix-0}), and they correspond to different control inputs. However, we can choose arbitrary one set of effective solutions for further analysis.
\end{remark}

Hereby, we transform the evolution equation of pinning node $x_{\gamma_\sigma},\sigma\in[1,\tau]$ from (\ref{equation-PBN-algebraic}) to the following form:
\begin{equation}\begin{aligned}\label{Ex(t+1)}
\mathbb{E}\{\mathbf{x}_{\gamma_\sigma}(t+1)\}&=\widehat{F}_{\gamma_\sigma}\mathbb{E}\{\ltimes_{j\in \mathcal{N}_{\gamma_\sigma}}\mathbf{x}_j(t)\}\\
&=\widehat{F}_{\gamma_\sigma}\mathbb{E}\{\mathbf{W}_{\gamma_\sigma}\ltimes_{j\in \mathcal{N}^\circ_{\gamma_\sigma}}\mathbf{x}_j(t)\ltimes_{j\in \mathcal{N}^\star_{\gamma_\sigma}}\mathbf{x}_j(t)\}\\
&\triangleq \mathbf{L}_{\gamma_\sigma}\mathbb{E}\{\ltimes_{j\in \mathcal{N}^\circ_{\gamma_\sigma}}\mathbf{x}_j(t)\ltimes_{j\in \mathcal{N}^\star_{\gamma_\sigma}}\mathbf{x}_j(t)\},
\end{aligned}
\end{equation}
where $\mathbf{W}_{\gamma_\sigma}=[\ltimes_{\sigma=1}^{\alpha_j}W_{[2,2^{|\pi_j^\sigma|\mathcal{N}_{\gamma_\sigma}}]}W_{[2^{|\mathcal{N}^\star_{\gamma_\sigma}|},2^{|\mathcal{N}^\circ_{\gamma_\sigma}|}]}]\otimes I_{2^{|\mathcal{N}_{\gamma_\sigma}|-|\mathcal{N}^\star_{\gamma_\sigma}|}}\in \mathscr{L}^{2^{|\mathcal{N}_{\gamma_\sigma}|}\times2^{|\mathcal{N}_{\gamma_\sigma}|}}$, and $\mathbf{L}_{\gamma_\sigma}\triangleq \widehat{F}_{\gamma_\sigma}\mathbf{W}_{\gamma_\sigma}\in \mathscr{P}^{2\times2^{|\mathcal{N}_{\gamma_\sigma}|}}$.

\subsection{Design Uniform State Feedback Controllers}
To achieve global stabilization for PBN (\ref{Ex(t+1)}), the state feedback controller imposed on pinning node $x_{\gamma_\sigma},\sigma\in [1,\tau]$ is designed as follows:
\begin{equation}\label{state-feedback-pinning-controllers}
u_{\gamma_\sigma}(t)=\varphi_{\gamma_\sigma}([x_j(t)]_{j\in \mathcal{N}_{\gamma_\sigma}}),
\end{equation}
where $u_{\gamma_\sigma}(t)\in \mathscr{D}$ denotes control input, and $\varphi_{\gamma_\sigma}:\mathscr{D}^{{|\mathcal{N}_{\gamma_\sigma}|}}\rightarrow\mathscr{D}$ is the logical function determined by node $x_{\gamma_\sigma}$'s in-neighbors. Subsequently, the pinning controlled PBN is described as:
\begin{equation}x_i(t+1)=\left\{\begin{array}{ll}\label{equation-BCN}
f_i([x_j(t)]_{j\in \mathcal{N}_i}),&i\in [1,n]\backslash \Lambda,\\
u_i(t) \odot_i f_i([x_j(t)]_{j\in \mathcal{N}_i}),&i\in\Lambda,
\end{array}\right.
\end{equation}
where $\odot_i:\mathscr{D}^2\rightarrow \mathscr{D}, i\in \Lambda$ is logical function connecting state feedback controller $u_i(t)$ and original dynamic equation $f_i([x_j(t)]_{j\in \mathcal{N}_i})$.
Furthermore, by resorting to the STP tool, the structure matrices of $\varphi_{\gamma_\sigma}$ and $\odot_{\gamma_\sigma}$ are respectively denoted by $\Psi_{\gamma_\sigma}\in \mathscr{L}^{2\times2^{|\mathcal{N}_{\gamma_\sigma}|}}$ and $M_{\odot_{\gamma_\sigma}}\in \mathscr{L}^{2\times 4}$, where $\gamma_\sigma\in\Lambda$. Then one derives that
\begin{equation}\begin{aligned}
\mathbf{u}_{\gamma_\sigma}(t)=&\Psi_{\gamma_\sigma}\ltimes_{j\in \mathcal{N}_{\gamma_\sigma}}\mathbf{x}_j(t)\\
=&\Psi_{\gamma_\sigma}\mathbf{W}_{\gamma_\sigma}\ltimes_{j\in \mathcal{N}^\circ_{\gamma_\sigma}}\mathbf{x}_j(t)\ltimes_{j\in \mathcal{N}^\star_{\gamma_\sigma}}\mathbf{x}_j(t)\\
\triangleq& \widehat{\Psi}_{\gamma_\sigma}\ltimes_{j\in \mathcal{N}^\circ_{\gamma_\sigma}}\mathbf{x}_j(t)\ltimes_{j\in \mathcal{N}^\star_{\gamma_\sigma}}\mathbf{x}_j(t),
\end{aligned}
\end{equation}
where $\widehat{\Psi}_{\gamma_\sigma}\triangleq \Psi_{\gamma_\sigma}\mathbf{W}_{\gamma_\sigma}\in \mathscr{L}^{2\times2^{|\mathcal{N}_{\gamma_\sigma}|}}$.
Consequently, the state updating of each pinning controlled node $x_{\gamma_\sigma},\sigma\in [1,\tau]$ is described in the algebraic form:
\begin{equation}\begin{aligned}
\mathbb{E}&\{\mathbf{x}_{\gamma_\sigma}(t+1)\}=M_{\odot_{\gamma_\sigma}}\mathbf{u}_{\gamma_\sigma}(t)\widehat{F}_i\mathbb{E}\{\ltimes_{j\in \mathcal{N}_i}\mathbf{x}_j(t)\}\\
=&M_{\odot_{\gamma_\sigma}}\widehat{\Psi}_{\gamma_\sigma}(I_{2^{|\mathcal{N}_{\gamma_\sigma}|}}\otimes \mathbf{L}_{\gamma_\sigma}) \Phi_{|\mathcal{N}_{\gamma_\sigma}|}\mathbb{E}\{\ltimes_{j\in\mathcal{N}^\circ_{\gamma_\sigma}}\mathbf{x}_j(t)\ltimes_{j\in \mathcal{N}^\star_{\gamma_\sigma}}\mathbf{x}_j(t)\}.
\end{aligned}
\end{equation}
If pinning controlled PBN (\ref{equation-BCN}) achieves global stabilization, it satisfies that
\begin{equation}\label{stable-structure-matrix}\begin{aligned}
M_{\odot_{\gamma_\sigma}}\widehat{\Psi}_{\gamma_\sigma}(I_{2^{|\mathcal{N}_{\gamma_\sigma}|}}\otimes \mathbf{L}_{\gamma_\sigma}) \Phi_{|\mathcal{N}_{\gamma_\sigma}|}&=G_{\gamma_\sigma}(I_{2^{|\mathcal{N}^\circ_{\gamma_\sigma}|}}\otimes \mathbf{1}^\top_{2^{|\mathcal{N}^\star_{\gamma_\sigma}|}})\\
&:=\mathbf{T}_{\gamma_\sigma}.
\end{aligned}
\end{equation}
Therefore, the effective state feedback controller (or structure matrices $M_{\odot_{\gamma_\sigma}}\in \mathscr{L}^{2\times 4}$ and $\widehat{\Psi}_{\gamma_\sigma}\in \mathscr{L}^{2\times 2^{|\mathcal{N}_{\gamma_\sigma}|}}$) for pinning node $x_{\gamma_\sigma}$ can be obtained by solving the above equations. Particularly, if equation (\ref{stable-structure-matrix}) is solvable,
logical functions $\odot_{\gamma_\sigma}$ and $\varphi_{\gamma_\sigma}$ can be uniquely derived according to Lemma \ref{lemma-structure matrix}.
Subsequently, we study the solvability criterion of equation (\ref{stable-structure-matrix}), before which several subsets are constructed as follows:
\begin{equation}\begin{aligned}\label{theorem-condition}
\Omega_1&=\{\mathrm{Col}_i(\mathbf{T}_{\gamma_\sigma})~|~\mathrm{Col}_i(\mathbf{T}_{\gamma_\sigma})=\mathrm{Col}_i(\mathbf{L}_{\gamma_\sigma})\},\\
\Omega_2&=\{\mathrm{Col}_i(\mathbf{T}_{\gamma_\sigma})~|~\mathrm{Col}_i(\mathbf{T}_{\gamma_\sigma})=\mathrm{Col}_i(\mathbf{L}_{\gamma_\sigma})=(1,0)^\top;~\mathrm{or}~\\
&~~~~\mathrm{Col}_i(\mathbf{T}_{\gamma_\sigma})=(1,0)^\top,\mathrm{Col}_i(\mathbf{L}_{\gamma_\sigma})=(0,1)^\top;~\mathrm{or}~\\
&~~~~\mathrm{Col}_i(\mathbf{T}_{\gamma_\sigma})=(1,0)^\top,\mathrm{Col}_i(\mathbf{L}_{\gamma_\sigma})\in \mathbb{B}_2\},\\
\Omega_3&=\{\mathrm{Col}_i(\mathbf{T}_{\gamma_\sigma})~|~\mathrm{Col}_i(\mathbf{T}_{\gamma_\sigma})=(1,1)^\top-\mathrm{Col}_i(\mathbf{L}_{\gamma_\sigma})\},\\
\Omega_4&=\{\mathrm{Col}_i(\mathbf{T}_{\gamma_\sigma})~|~\mathrm{Col}_i(\mathbf{T}_{\gamma_\sigma})=\mathrm{Col}_i(\mathbf{L}_{\gamma_\sigma})=(0,1)^\top;~\mathrm{or}~\\
&~~~~~\mathrm{Col}_i(\mathbf{T}_{\gamma_\sigma})=(0,1)^\top,\mathrm{Col}_i(\mathbf{L}_{\gamma_\sigma})=(1,0)^\top;~\mathrm{or}~\\
&~~~~~\mathrm{Col}_i(\mathbf{T}_{\gamma_\sigma})=(0,1)^\top,\mathrm{Col}_i(\mathbf{L}_{\gamma_\sigma})\in \mathbb{B}_2\}.
\end{aligned}
\end{equation}

\begin{theorem} \label{theorem-2-condition}
Equations (\ref{stable-structure-matrix}) is solvable for pinning node $x_{\gamma_\sigma},\sigma\in [1,\tau]$ if and only if $\mathrm{Col}(\mathbf{T}_{\gamma_\sigma})$ can be completely covered by at most two of $\Omega_1,\Omega_2,\Omega_3$ and $\Omega_4$.
\end{theorem}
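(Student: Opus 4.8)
The plan is to collapse the single matrix equation (\ref{stable-structure-matrix}) into $2^{|\mathcal{N}_{\gamma_\sigma}|}$ independent one-column identities and then run a finite case analysis over the admissible shapes of $M_{\odot_{\gamma_\sigma}}$. Write $m=|\mathcal{N}_{\gamma_\sigma}|$ and, for $j\in[1,2^m]$, abbreviate $\ell_j:=\mathrm{Col}_j(\mathbf{L}_{\gamma_\sigma})$, $t_j:=\mathrm{Col}_j(\mathbf{T}_{\gamma_\sigma})$ and $\psi_j:=\mathrm{Col}_j(\widehat{\Psi}_{\gamma_\sigma})\in\Delta_2$. The first move is to post-multiply (\ref{stable-structure-matrix}) by $\delta_{2^m}^j$ and simplify: by Lemma \ref{lemma-reducing matrix} one has $\Phi_m\delta_{2^m}^j=\delta_{2^m}^j\otimes\delta_{2^m}^j$, whence the mixed-product rule for the Kronecker product gives $(I_{2^m}\otimes\mathbf{L}_{\gamma_\sigma})\Phi_m\delta_{2^m}^j=\delta_{2^m}^j\otimes\ell_j$ and then $\widehat{\Psi}_{\gamma_\sigma}\ltimes(\delta_{2^m}^j\otimes\ell_j)=\psi_j\otimes\ell_j$, so that column $j$ of (\ref{stable-structure-matrix}) becomes exactly $M_{\odot_{\gamma_\sigma}}(\psi_j\otimes\ell_j)=t_j$. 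Since $\mathbf{W}_{\gamma_\sigma}$ is a permutation matrix, $\widehat{\Psi}_{\gamma_\sigma}$ and hence $\varphi_{\gamma_\sigma}$ may be prescribed column by column; therefore solving (\ref{stable-structure-matrix}) is equivalent to selecting one logical matrix $M_{\odot_{\gamma_\sigma}}=[c_1,c_2,c_3,c_4]\in\mathscr{L}^{2\times4}$ and, for each $j$, one bit $\psi_j\in\Delta_2$, such that $[c_1,c_2]\ell_j=t_j$ when $\psi_j=(1,0)^\top$ and $[c_3,c_4]\ell_j=t_j$ when $\psi_j=(0,1)^\top$.

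The second step classifies the left block $[c_1,c_2]$ and right block $[c_3,c_4]$ of $M_{\odot_{\gamma_\sigma}}$. As $c_1,\dots,c_4\in\{(1,0)^\top,(0,1)^\top\}$, the pair $(c_1,c_2)$ takes only four values, and for any $2$-dimensional probability vector $\ell$ a direct computation gives $[c_1,c_2]\ell=\ell$, or $(1,1)^\top-\ell$, or the constant $(1,0)^\top$, or the constant $(0,1)^\top$, according to whether $(c_1,c_2)$ is $((1,0)^\top,(0,1)^\top)$, $((0,1)^\top,(1,0)^\top)$, $((1,0)^\top,(1,0)^\top)$ or $((0,1)^\top,(0,1)^\top)$. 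Comparing these four outputs with the definitions in (\ref{theorem-condition}), the set of column indices $j$ that this block can serve (those with $[c_1,c_2]\ell_j=t_j$) is precisely $\Omega_1$, $\Omega_3$, $\Omega_2$ and $\Omega_4$, respectively. Here one uses the elementary fact that every $2$-dimensional probability vector lies in $\{(1,0)^\top,(0,1)^\top\}\cup\mathbb{B}_2$ --- this is exactly why the defining clauses of $\Omega_2$ and $\Omega_4$ impose no condition on $\mathrm{Col}_j(\mathbf{L}_{\gamma_\sigma})$. The same applies verbatim to $[c_3,c_4]$, so every admissible $M_{\odot_{\gamma_\sigma}}$ corresponds to a choice of two (not necessarily distinct) sets from $\{\Omega_1,\Omega_2,\Omega_3,\Omega_4\}$.

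With these two steps in place, both directions are short. For necessity, a solution fixes the types $a,b\in\{1,2,3,4\}$ of $[c_1,c_2]$ and $[c_3,c_4]$; since $\psi_j$ routes column $j$ to a block which must then serve it, $t_j\in\Omega_a\cup\Omega_b$ for every $j$, i.e. $\mathrm{Col}(\mathbf{T}_{\gamma_\sigma})\subseteq\Omega_a\cup\Omega_b$. For sufficiency, if $\mathrm{Col}(\mathbf{T}_{\gamma_\sigma})\subseteq\Omega_a\cup\Omega_b$, choose $[c_1,c_2]$ of type $a$ and $[c_3,c_4]$ of type $b$, and put $\psi_j=(1,0)^\top$ for those $j$ with $t_j\in\Omega_a$ and $\psi_j=(0,1)^\top$ for the rest (which then satisfy $t_j\in\Omega_b$); the column identities $M_{\odot_{\gamma_\sigma}}(\psi_j\otimes\ell_j)=t_j$ hold by the classification, and $\odot_{\gamma_\sigma}$ and $\varphi_{\gamma_\sigma}$ are recovered from $M_{\odot_{\gamma_\sigma}}$ and $\Psi_{\gamma_\sigma}=\widehat{\Psi}_{\gamma_\sigma}\mathbf{W}_{\gamma_\sigma}^{-1}$ via Lemma \ref{lemma-structure matrix}. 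The step that needs genuine care is the first one: the STP dimension-matching has to be tracked carefully so that the $j$-th column of the left-hand side of (\ref{stable-structure-matrix}) reduces to exactly $M_{\odot_{\gamma_\sigma}}(\psi_j\otimes\ell_j)$, with the power-reducing matrix $\Phi_m$ playing the role of duplicating the in-neighbour configuration $\delta_{2^m}^j$; once this normalisation is done, the rest is a routine enumeration of the sixteen candidate matrices $M_{\odot_{\gamma_\sigma}}$.
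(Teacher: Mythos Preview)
Your proposal is correct and follows essentially the same route as the paper: both reduce (\ref{stable-structure-matrix}) to the $2^{m}$ column identities $M_{\odot_{\gamma_\sigma}}(\psi_j\otimes\ell_j)=t_j$ (the paper writes this in scalar form as (\ref{equations-BCN-proof})) and then classify the four possible shapes of each $2\times2$ block of $M_{\odot_{\gamma_\sigma}}$, matching them to $\Omega_1,\Omega_2,\Omega_3,\Omega_4$. Your block-level formulation is more compact than the paper's explicit scalar case enumeration, but the underlying argument is identical.
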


\begin{proof}
(Sufficiency.) With respect to equation (\ref{stable-structure-matrix}), we presume its parameters as
\begin{equation*}M_{\odot_{\gamma_\sigma}}=\left(\begin{array}{cccc}
\alpha_1 & \alpha_2 & \alpha_3 & \alpha_4 \\
1-\alpha_1 & 1-\alpha_2 & 1-\alpha_3 & 1-\alpha_4
\end{array}\right),
\end{equation*}

\begin{equation*}\widehat{\Psi}_{\gamma_\sigma}=\left(\begin{array}{cccc}
\beta_1 & \beta_2 & \cdots & \beta_{2^{|\mathcal{N}_{\gamma_\sigma}|}} \\
1-\beta_1 & 1-\beta_2 & \cdots & 1-\beta_{2^{|\mathcal{N}_{\gamma_\sigma}|}}
\end{array}\right),
\end{equation*}

\begin{equation*}\mathbf{L}_{\gamma_\sigma}=\left(\begin{array}{cccc}
\xi_1 & \xi_2 & \cdots & \xi_{2^{|\mathcal{N}_{\gamma_\sigma}|}} \\
1-\xi_1 & 1-\xi_2 & \cdots & 1-\xi_{2^{|\mathcal{N}_{\gamma_\sigma}|}}
\end{array}\right),
\end{equation*}
\begin{equation*}\mathrm{and}~\mathbf{T}_{\gamma_\sigma}=\left(\begin{array}{cccc}
\eta_1 & \eta_2 & \cdots & \eta_{2^{|\mathcal{N}_{\gamma_\sigma}|}} \\
1-\eta_1 & 1-\eta_2 & \cdots & 1-\eta_{2^{|\mathcal{N}_{\gamma_\sigma}|}}
\end{array}\right),
\end{equation*}
where $\alpha_1,\alpha_2,\alpha_3,\alpha_4,\beta_1,\cdots,\beta_{2^{|\mathcal{N}_{\gamma_\sigma}|}}\in \mathscr{D}$ and $\xi_1,\cdots,\xi_{2^{|\mathcal{N}_{\gamma_\sigma}|}},$ $\eta_1,\cdots,\eta_{2^{|\mathcal{N}_{\gamma_\sigma}|}}\in \mathbb{B}\bigcup\{0,1\}$. After that, equations (\ref{stable-structure-matrix}) is converted into the following equations:
\begin{equation}\left\{\small{\begin{aligned}\label{equations-BCN-proof}
[\alpha_1\beta_1+\alpha_3(1-\beta_1)]\xi_1+&[\alpha_2\beta_1+\alpha_4(1-\beta_1)](1-\xi_1)=\eta_1,\\
[\alpha_1\beta_2+\alpha_3(1-\beta_2)]\xi_2+&[\alpha_2\beta_2+\alpha_4(1-\beta_2)](1-\xi_2)=\eta_2,\\
&\vdots \\
[\alpha_1\beta_i+\alpha_3(1-\beta_i)]\xi_i+&[\alpha_2\beta_i+\alpha_4(1-\beta_i)](1-\xi_i)=\eta_i,\\
&\vdots \\
[\alpha_1\beta_{2^{|\mathcal{N}_{\gamma_\sigma}|}}+\alpha_3(1-&\beta_{2^{|\mathcal{N}_{\gamma_\sigma}|}})]
\xi_{2^{|\mathcal{N}_{\gamma_\sigma}|}}+\\
[\alpha_2\beta_{2^{|\mathcal{N}_{\gamma_\sigma}|}}+\alpha_4(1-&\beta_{2^{|\mathcal{N}_{\gamma_\sigma}|}})](1-\xi_{2^{|\mathcal{N}_{\gamma_\sigma}|}})=\eta_{2^{|\mathcal{N}_{\gamma_\sigma}|}}.
\end{aligned}}\right.
\end{equation}
In what follows, we discuss the solution of the $i$-th ($i\in[1,2^{|\mathcal{N}_{\gamma_\sigma}|}]$) equation from the following three cases:
$$(a)~\eta_i=\xi_i;~~(b)~\eta_i=1-\xi_i;~~(c)~\eta_i\in \{0,1\},\xi\in \mathbb{B}.$$
First, we prove that cases $(a)\sim(c)$ have covered all situations. Since $\beta_i\in \mathscr{D}$, it derives $\alpha_3\xi_i+\alpha_4(1-\xi_i)=\eta_i$ when $\beta_i=0$; and $\alpha_1\xi_i+\alpha_2(1-\xi_i)=\eta_i$ when $\beta_i=1$. Owing to $\alpha_1,\alpha_2,\alpha_3,\alpha_4\in \mathscr{D}$, we can classify it into four situations:
(i) $\alpha_3=\alpha_4=0$; or $\alpha_1=\alpha_2=0$,
(ii) $\alpha_3=0,\alpha_4=1$; or $\alpha_1=0,\alpha_2=1$,
(iii) $\alpha_3=1,\alpha_4=0$; or $\alpha_1=1,\alpha_2=0$, and
(iv) $\alpha_3=\alpha_4=1$; or $\alpha_1=\alpha_2=1$, which respectively derive
(i) $\eta_i=0$,
(ii) $\eta_i=1-\xi_i$,
(iii) $\eta_i=\xi_i$,
and (iv) $\eta_i=1$. Moreover, case $(c)$ means $\eta_i\neq\xi_i$ and $\eta_i\neq 1-\xi_i$, then it holds $0<\xi_i<1$ since $\eta_i$ only equals to 0 or 1.

For case $(a)$, if $\beta_i=0$, one has $\alpha_3\xi_i+\alpha_4(1-\xi_i)=\xi_i$, that is, $(1-\alpha_3+\alpha_4)\xi_i=\alpha_4$. There are two solutions:
\begin{equation*}
(a1)~1-\alpha_3+\alpha_4=\alpha_4=0;~~~
(a2)~\xi_i=\frac{\alpha_4}{1-\alpha_3+\alpha_4},
\end{equation*}
which derives
\begin{equation*}\begin{array}{l}
(a1')~\xi_i=\eta_i, \alpha_3=1,\beta_i=\alpha_4=0; \\
(a2')~\xi_i=\eta_i=0,\beta_i=\alpha_4=0,\alpha_3=0;\\
(a3')~\xi_i=\eta_i=1,\beta_i=0,\alpha_3=\alpha_4=1;\\
(a4')~\xi_i=\eta_i=\frac{1}{2},\beta_i=\alpha_3=0,\alpha_4=1.
\end{array}
\end{equation*}
If $\beta_i=1$, it has $\alpha_1\xi_i+\alpha_2(1-\xi_i)=\xi_i$ and derives
\begin{equation*}\begin{array}{l}
(a1'')~\xi_i=\eta_i,\beta_i=\alpha_1=1,\alpha_2=0; \\ (a2'')~\xi_i=\eta_i=0,\beta_i=1,\alpha_2=0,\alpha_1=0;\\
(a3'')~\xi_i=\eta_i=1,\beta_i=\alpha_1=\alpha_2=1;\\
(a4'')~\xi_i=\eta_i=\frac{1}{2},\alpha_1=0,\beta_i=\alpha_2=1.
\end{array}
\end{equation*}

For case $(b)$, if $\beta_i=0$, one acquires $\alpha_3\xi_i+\alpha_4(1-\xi_i)=1-\xi_i$. It also has two solutions:
\begin{equation*}
(b1)~\alpha_3-\alpha_4+1=1-\alpha_4=0;~~~
(b2)~\xi_i=\frac{1-\alpha_4}{\alpha_3-\alpha_4+1},
\end{equation*}
which obtains
\begin{equation*}\begin{array}{ll}
(b1')~\eta_i=1-\xi_i,\beta_i=\alpha_3=0,\alpha_4=1;\\ (b2')~\eta_i=1-\xi_i=\frac{1}{2},\alpha_3=1,\beta_i=\alpha_4=0;\\
(b3')~\eta_i=1-\xi_i=1,\beta_i=0,\alpha_3=\alpha_4=1;\\
(b4')~\eta_i=1-\xi_i=0,\beta_i=\alpha_3=\alpha_4=0.
\end{array}
\end{equation*}
Else if $\beta_i=1$, it acquires $\alpha_1\xi_i+\alpha_2(1-\xi_i)=1-\xi_i$, which derives
\begin{equation*}\begin{array}{ll}
(b1'')~\eta_i=1-\xi_i,\alpha_1=0,\beta_i=\alpha_2=1; \\ (b2'')~\eta_i=1-\xi_i=\frac{1}{2},\beta_i=\alpha_1=1,\alpha_2=0;\\
(b3'')~\eta_i=1-\xi_i=1,\beta_i=\alpha_1=\alpha_2=1;\\
(b4'')~\eta_i=1-\xi_i=0,\beta_i=1,\alpha_1=\alpha_2=0.
\end{array}
\end{equation*}

Similarly, for case $(c)$, if $\beta_i=0$, it gains $\alpha_3\xi_i+\alpha_4(1-\xi_i)=0~\mathrm{or}~1$, that is,
\begin{equation*}\begin{array}{ll}
(c1')~\eta_i=0,\xi_i\in \mathbb{B}, \beta_i=\alpha_3=\alpha_4=0;~\mathrm{or}\\ (c2')~\eta_i=1,\xi_i\in\mathbb{B},\beta_i=0,\alpha_3=\alpha_4=1.
\end{array}
\end{equation*}
Else if $\beta_i=1$, it gains $\alpha_1\xi_i+\alpha_2(1-\xi_i)=0~\mathrm{or}~1$, that is,
\begin{equation*}\begin{array}{ll}
(c1'')~\eta_i=0,\xi_i\in \mathbb{B}, \beta_i=1,\alpha_1=\alpha_2=0; ~\mathrm{or}\\ (c2'')~\eta_i=1,\xi_i\in\mathbb{B},\beta_i=1,\alpha_1=\alpha_2=1.
\end{array}
\end{equation*}

The above solutions can be classified into four groups:
\begin{description}
  \item[\textbf{Group 1:}]~~~~$\eta_i=\xi_i$; $\eta_i=\xi_i=0$; $\eta_i=1-\xi_i=\frac{1}{2}$.

  \item[Note 1:]~$\eta_i=\xi_i\Leftrightarrow$ $(a1')$ or $(a1'')$; $\eta_i=\xi_i=0\Leftrightarrow$ $(a2')$ or $(a2'')$;
  $\eta_i=1-\xi_i=\frac{1}{2}\Leftrightarrow$ $(b2')$ or $(b2'')$.

  \item[\textbf{Group 2:}]~~~~$\eta_i=\xi_i=1$; $\eta_i=1-\xi_i=1$; $\eta_i=1,\xi_i\in \mathbb{B}$.
  \item[Note 2:]~$\eta_i=\xi_i=1\Leftrightarrow$ $(a3')$ or $(a3'')$;
  $\eta_i=1-\xi_i=1\Leftrightarrow$ $(b3')$ or $(b3'')$;
  $\eta_i=1,\xi_i\in \mathbb{B}\Leftrightarrow$ $(c2')$ or $(c2'')$.

  \item[\textbf{Group 3:}]~~~~$\eta_i=\xi_i=\frac{1}{2}$; $\eta_i=1-\xi_i$; $\eta_i=1-\xi_i=1$ (occurred in Group 2).
  \item[Note 3:]~$\eta_i=\xi_i=\frac{1}{2}\Leftrightarrow$ $(a4')$ or $(a4'')$;
  $\eta_i=1-\xi_i\Leftrightarrow$ $(b1')$ or $(b1'')$.

  \item[\textbf{Group 4:}]~~~~$\eta_i=\xi_i=0$ (occurred in Group 1), $\eta_i=1-\xi_i=0$, and $\eta_i=0,\xi_i\in \mathbb{B}$.
  \item[Note 4:]~~$\eta_i=1-\xi_i=0\Leftrightarrow$ $(b4')$ or $(b4'')$;
  $\eta_i=0,\xi_i\in \mathbb{B}\Leftrightarrow$ $(c1')$ or $(c1'')$.
\end{description}
Particularly, $\eta_i=\xi_i\in \{0,1,\frac{1}{2}\}$ and $\eta_i=1-\xi_i=\frac{1}{2}$ can be viewed as $\eta_i=\xi_i$ belonging to \textbf{Group 1}, while $\eta_i=1-\xi_i\in \{0,1,\frac{1}{2}\}$ and $\eta_i=\xi_i=\frac{1}{2}$ can be regarded as $\eta_i=1-\xi_i$ belonging to \textbf{Group 3}. Then, the above four groups respectively correspond to $\Omega_1,\Omega_2,\Omega_3$ and $\Omega_4$ defined in (\ref{theorem-condition}).
Besides, the cases in the same group can be handled by the same pinning controllers. More precisely, only when the solution of equations (\ref{equations-BCN-proof}) can be covered by the cases within at most two groups, can we obtain effective solutions.

(Necessity.) Suppose that matrix $\mathbf{T}_{\gamma_\sigma}$ is covered by at most two sets, then we prove that equations (\ref{equations-BCN-proof}) is solvable.

Case 1: $\mathrm{Col}(\mathbf{T}_{\gamma_\sigma})$ is contained by one of sets $\Omega_1,\cdots,\Omega_4$. If part of $\mathrm{Col}_i(\mathbf{T}_{\gamma_\sigma}), i\in [1,2^{|\mathcal{N}_{\gamma_\sigma}|}]$ is equal to $\mathrm{Col}_i(\mathbf{L}_{\gamma_\sigma})$, the other part is equal to $1-\mathrm{Col}_i(\mathbf{L}_{\gamma_\sigma})=\frac{1}{2}$, it derives
$M_{\odot_{\gamma_\sigma}}=\delta_2[\ast,\ast,1,2],\widehat{\Psi}_{\gamma_\sigma}=\delta_2[\underbrace{2,\cdots,2}_{2^{|\mathcal{N}_{\gamma_\sigma}|}}]$; or $M_{\odot_{\gamma_\sigma}}=\delta_2[1,2,\ast,\ast],\widehat{\Psi}_{\gamma_\sigma}=\delta_2[\underbrace{1,\cdots,1}_{2^{|\mathcal{N}_{\gamma_\sigma}|}}]$,
where the elements in the positions of `$\ast$' can be arbitrary chosen from $\{0,1\}$. Besides, other cases can similarly derive effective pinning controllers ($M_{\odot_{\gamma_\sigma}}$ and $\widehat{\Psi}_{\gamma_\sigma}$).

Case 2: $\mathrm{Col}(\mathbf{T}_{\gamma_\sigma})$ is contained by arbitrary two of sets $\Omega_1,\cdots,\Omega_4$. Next, we verify one situation and the others can be similarly proved.
Let $\Theta\subset [1,2^{|\mathcal{N}_{\gamma_\sigma}|}]$ be the index set of columns $\mathrm{Col}_i(\mathbf{T}_{\gamma_\sigma})$ belonging to $\Omega_1$, then the other columns satisfy $\mathrm{Col}_i(\mathbf{T}_{\gamma_\sigma})\in \Omega_2,i\in [1,2^{|\mathcal{N}_{\gamma_\sigma}|}]\backslash \Theta$. In this case, the structure matrices can be determined as
\begin{equation}\left\{\small{\begin{array}{l}\label{eg1}
M_{\odot_{\gamma_\sigma}}=\delta_2[1,1,1,2],\\
\widehat{\Psi}_{\gamma_\sigma}=\delta_2[\alpha_1,\alpha_2,\cdots,\alpha_{2^{|\mathcal{N}_{\gamma_\sigma}|}}],\\
\alpha_i=2~\mathrm{if}~i\in \Theta,\\
\alpha_i=1~\mathrm{if}~i\in [1,2^{|\mathcal{N}_{\gamma_\sigma}|}]\backslash\Theta,\\
\end{array}}\right.
\end{equation}
or
\begin{equation}\left\{\small{\begin{array}{l}\label{eg2}
M_{\odot_{\gamma_\sigma}}=\delta_2[1,2,1,1],\\
\widehat{\Psi}_{\gamma_\sigma}=\delta_2[\alpha_1,\alpha_2,\cdots,\alpha_{2^{|\mathcal{N}_{\gamma_\sigma}|}}],\\
\alpha_i=1~\mathrm{if}~i\in \Theta,\\
\alpha_i=2~\mathrm{if}~i\in [1,2^{|\mathcal{N}_{\gamma_\sigma}|}]\backslash\Theta.\\
\end{array}}\right.
\end{equation}
Taking (\ref{eg1}) or (\ref{eg2}) into equations (\ref{equations-BCN-proof}), it obviously holds. Therefore, the proof is completed.
\end{proof}

\subsection{Design Non-Uniform State Feedback Controllers}
For each pinning node $x_{\gamma_\sigma},\sigma\in [1,\tau]$, designing specific state feedback controller for each possibility is apparently feasible, which has been proved in \cite{liff2015pinning} and \cite{liff2019set}. However, for the sake of saving control cost, we preferentially choose uniform state feedback controllers if equations (\ref{stable-structure-matrix}) is solvable.

Sequentially, examining the solvability criterion in Theorem \ref{theorem-2-condition}, and collecting the subscript indices of pinning nodes, whose equations (\ref{stable-structure-matrix}) is unsolvable, into set $\Lambda_2\subseteq \Lambda$. Besides, denote $\Lambda_1=\Lambda\backslash \Lambda_2$ as the solvable index set. The non-uniform state feedback controller imposed on $x_{\gamma_\varrho}\in\Lambda_2$ depends on each possibility $f_{\gamma_\varrho}^\kappa,\kappa\in[1,l_{\gamma_\varrho}]$, and is given as that:
\begin{equation}\label{state-feedback-pinning-controller-1}
u^\kappa_{\gamma_\varrho}(t)=\varphi^\kappa_{\gamma_\varrho}([x_j(t)]_{j\in \mathcal{N}_{\gamma_\varrho}}).
\end{equation}
After that, the state updating of each node $x_i,i\in[1,n]$ obeys the following rule:
\begin{equation*}
x_i(t+1)=g_i([x_j(t)]_{j\in \mathcal{N}^\circ_i}),
\end{equation*}
where $g_i\in \{g^\kappa_i([x_j(t)]_{j\in \mathcal{N}^\circ_i})|\kappa\in[1,l_i]\}$, and
\begin{equation}g^\kappa_i=\left\{\begin{array}{ll}\label{g}
f^\kappa_i([x_j(t)]_{j\in \mathcal{N}_i}), &i\in[1,n]\backslash \Lambda,\\
u_i(t) \odot_i f^\kappa_i([x_j(t)]_{j\in \mathcal{N}_i}), &i\in \Lambda_1,\\
u^\kappa_i(t) \odot^\kappa_i f^\kappa_i([x_j(t)]_{j\in \mathcal{N}_i}), &i\in \Lambda_2.
\end{array}\right.
\end{equation}
Thereinto, $\odot^\kappa_i:\mathscr{D}^2\rightarrow \mathscr{D}$ is logical function connecting control input $u^\kappa_i(t)$ and original dynamic equation $f^\kappa_i([x_j(t)]_{j\in \mathcal{N}_i})$.
Moreover, denote $\widehat{\Psi}^\kappa_i\in \mathscr{L}^{2\times 2^{|\mathcal{N}_{i,\kappa}|}}$ and $M^\kappa_{\odot_i}\in \mathscr{L}^{2\times 4}$ as the structure matrix of $u^\kappa_i$ and $\odot^\kappa_i$.

Similarly, for each pinning node $x_{\gamma_\varrho}\in\Lambda_2$, the state feedback pinning controllers can be designed by solving structure matrices $\widehat{\Psi}^\kappa_{\gamma_\varrho}$ and $M^\kappa_{\odot_{\gamma_\varrho}},\kappa\in [1,l_{\gamma_\varrho}]$ from the below equations:
\begin{equation}\label{stable-structure-matrices}\begin{aligned}
M^\kappa_{\odot_{\gamma_\varrho}}\widehat{\Psi}^\kappa_{\gamma_\varrho}(I_{2^{|\mathcal{N}_{\gamma_\varrho,\kappa}|}}\otimes \mathbf{L}^\kappa_{\gamma_\varrho}) \Phi_{|\mathcal{N}_{\gamma_\varrho,\kappa}|}&=G^\kappa_{\gamma_\varrho}(I_{2^{|\mathcal{N}^\circ_{\gamma_\varrho,\kappa}|}}\otimes \mathbf{1}^\top_{2^{|\mathcal{N}^\star_{\gamma_\varrho,\kappa}|}})\\
&:=\mathbf{T}^\kappa_{\gamma_\varrho},
\end{aligned}
\end{equation}
where $\mathbf{L}^\kappa_{\gamma_\varrho}=F_{\gamma_\varrho}^\kappa\mathbf{W}_{\gamma_\varrho}$, $G^\kappa_{\gamma_\varrho} \in \mathscr{L}^{2\times 2^{|\mathcal{N}^\circ_{\gamma_\varrho,\kappa}|}}$, and $\mathbf{T}^\kappa_{\gamma_\varrho}\in \mathscr{L}^{2\times 2^{|\mathcal{N}_{\gamma_\varrho,\kappa}|}}$. Fortunately, equations (\ref{stable-structure-matrices}) is always solvable, whose detailed proof can be found in \cite{liff2015pinning,liff2019set}.
Subsequently, denote the extended structure matrix of $G^\kappa_{\gamma_\rho}\in \mathscr{L}^{2\times 2^{|\mathcal{N}^\circ_{\gamma_\varrho,\kappa}|}}$ by $\widehat{G}^\kappa_{\gamma_\rho}\in \mathscr{L}^{2\times 2^{|\mathcal{N}^\circ_{\gamma_\rho}|}}$ satisfying $\widehat{G}^\kappa_{\gamma_\rho}\ltimes_{j\in \mathcal{N}^\circ_{\gamma_\rho}}\mathbf{x}_j(t)=G^\kappa_{\gamma_\rho}\ltimes_{j\in \mathcal{N}^\circ_{\gamma_\rho,\kappa}}\mathbf{x}_j(t)$.

Eventually, the mathematical expectation of pinning controlled PBN is derived as that:
\begin{equation}
\mathbb{E}\{\mathbf{x}_i(t+1)\}=G_i\mathbb{E}\{\ltimes_{j\in \mathcal{N}^\circ_i}\mathbf{x}_j(t)\},
\end{equation}
where $G_i=\widehat{F}_i$ if $i\in [1,n]\backslash\Lambda$; $G_i=\sum_{\kappa=1}^{l_i}p^\kappa_iG^\kappa_i$ if $i\in \Lambda_1$; and $G_i=\sum_{\kappa=1}^{l_i}p^\kappa_i\widehat{G}^\kappa_i$ if $i\in \Lambda_2$.

%

\section{Pinning Control for Stabilizing PBNs to A certain State}\label{section-graphicaltheory}
Notice that if pinning controlled PBN (\ref{equation-BCN}) has achieved stabilization, there may occur several different steady states. Hence, in this section, we further impose another pinning control on PBN (\ref{equation-BCN}) to achieve stabilized at a unique and specific steady state.
In the following sequel, let steady state be $\{\epsilon_1,\epsilon_2,\cdots,\epsilon_n\},\epsilon_i\in \mathscr{D},i\in [1,n]$ and its algebraic form be $\delta_{2^n}^\epsilon=\ltimes_{i=1}^n \delta_2^{2-\epsilon_i}$.

Thereupon, we find matrices $\mathbf{Q}_i,\in\mathscr{L}^{2\times 2^{|\mathcal{N}^\circ_i|}},i\in[1,n]$, and binary variables $\lambda_1,\cdots,\lambda_n\in \mathscr{D}$ such that
\begin{equation}\label{objective-function}
\mathrm{min}~~~~ \Xi\triangleq\sum_{i=1}^n \lambda_i,
\end{equation}
subjects to
\begin{equation}\left\{\begin{array}{c}\label{constraint}
\delta_2^{2-\epsilon_1}=[(1-\lambda_1)G_1+\lambda_1\mathbf{Q}_1]\ltimes_{j\in \mathcal{N}^\circ_1}\delta_2^{2-\epsilon_j},\\
\vdots\\
\delta_2^{2-\epsilon_n}=[(1-\lambda_n)G_n+\lambda_n\mathbf{Q}_n]\ltimes_{j\in \mathcal{N}^\circ_n}\delta_2^{2-\epsilon_j}.
\end{array}\right.
\end{equation}
\begin{remark}
Here, $\Xi$ is termed as control cost function, and objective function (\ref{objective-function}) determines
the number of pinning nodes as minimal as possible. This optimization problem can be deemed as a linear programming problem and can be solved by existing methods.
\end{remark}

Assuming that the solution of the above optimization problem is $\Xi=\pi$ with $\lambda_{\zeta_1}=\lambda_{\zeta_2}=\cdots=\lambda_{\zeta_\pi}=1$, then the second group of pinning nodes (to achieve stabilizing at a certain state) is $x_{\zeta_1},\cdots,x_{\zeta_\pi}$, whose subscript indices are collected into set $\Gamma$.
The state feedback controller imposed on node $x_{\zeta_\iota},\iota\in[1,\pi]$ is described as:
\begin{equation}\left\{\begin{array}{l}\label{equation-BCN-2}
x_{\zeta_\iota}(t+1)=v_{\zeta_\iota}(t) \oplus_{\zeta_\iota} g_{\zeta_\iota}([x_j(t)]_{j\in \mathcal{N}^\circ_{\zeta_\iota}}),\\
v_{\zeta_\iota}(t)=\phi_{\zeta_\iota}([x_j(t)]_{j\in \mathcal{N}^\circ_{\zeta_\iota}}),
\end{array}\right.
\end{equation}
where $\phi_{\zeta_\iota}([x_j(t)]_{j\in \mathcal{N}^\circ_{\zeta_\iota}})$ is logical function decided by node $x_{\zeta_\iota}$'s in-neighbors, and $\oplus_{\zeta_\iota}: \mathscr{D}^2\rightarrow \mathscr{D}$ is logical function connecting state feedback controller $v_{\zeta_\iota}$ and $g_{\zeta_\iota}([x_j(t)]_{j\in \mathcal{N}^\circ_{\zeta_\iota}})$.

Likewise, in resorting to STP, the structure matrices of $\phi_{\zeta_\iota}$ and
$\oplus_{\zeta_\iota},\iota\in [1,\pi]$ are respectively denoted by $\Upsilon_\iota\in \mathscr{L}^{2\times2^{|\mathcal{N}^\circ_\iota|}}$ and $M_{\oplus_\iota}\in \mathscr{L}^{2\times 4}$. Then we can obtain the algebraic form of (\ref{equation-BCN-2}) as below:
\begin{equation}\left\{\small{\begin{array}{l}
\mathbb{E}\{\mathbf{x}_{\zeta_\iota}(t+1)\}=M_{\oplus_{\zeta_\iota}}\Upsilon_{\zeta_\iota}(I_{2^{|\mathcal{N}^\circ_{\zeta_\iota}|}}\otimes G_{\zeta_\iota})\Phi_{|\mathcal{N}^\circ_{\zeta_\iota}|}\mathbb{E}\{\ltimes_{j\in\mathcal{N}^\circ_{\zeta_\iota}}\mathbf{x}_j(t)\},\\
v_{\zeta_\iota}(t)=\Upsilon_{\zeta_\iota}\ltimes_{j\in\mathcal{N}^\circ_{\zeta_\iota}}\mathbf{x}_j(t),
\end{array}}\right.
\end{equation}
where $\Upsilon_{\zeta_\iota}$ and $M_{\oplus_{\zeta_\iota}}$ can be solved from equations:
\begin{equation}\label{stable-structure-matrix-2}
M_{\oplus_{\zeta_\iota}}\Upsilon_{\zeta_\iota}(I_{2^{|\mathcal{N}^\circ_{\zeta_\iota}|}}\otimes G_{\zeta_\iota})\Phi_{|\mathcal{N}^\circ_{\zeta_\iota}|}=\mathbf{Q}_{\zeta_\iota}.
\end{equation}

\begin{remark}
The solvability of equations (\ref{stable-structure-matrix-2}) is guaranteed. Since each $\delta_2^{2-\epsilon_i}, i\in [1,n]$ is equal to $\delta_2^1$ or $\delta_2^2$, its corresponding matrix $\mathbf{Q}_i, i\in [1,n]$ can be directly valued as
$\delta_2[\underbrace{1,\cdots,1}_{2^{|\mathcal{N}^\circ_i|}}]$ or $\delta_2[\underbrace{2,\cdots,2}_{2^{|\mathcal{N}^\circ_i|}}]$. It implies that the column set of each $\mathbf{Q}_i, i\in [1,n]$ can be completely covered by $\Omega_2$ or $\Omega_4$. Therefore, equation (\ref{stable-structure-matrix-2}) must be solvable according to Theorem \ref{theorem-2-condition}.
\end{remark}

Here, the pinning controlled PBN (\ref{equation-PBN}) is represented as that:
\begin{equation}x_i(t+1)=\left\{\begin{array}{l}\label{equation-BCN-final}
v_i(t) \oplus_i [u_i(t)\odot_i f_i([x_j(t)]_{j\in \mathcal{N}_i})], ~i\in\Lambda\cap \Gamma, \\
u_i(t)\odot_i f_i([x_j(t)]_{j\in \mathcal{N}_i}), ~~~~~~~~~i\in\Lambda\backslash\Gamma, \\
v_i(t) \oplus_i f_i([x_j(t)]_{j\in \mathcal{N}_i}),~~~~~~~~~i\in\Gamma\backslash\Lambda,\\
f_i([x_j(t)]_{j\in \mathcal{N}_i}),~~~~~~~~~~~~~i\in[1,n]\backslash(\Lambda\cap\Gamma).
\end{array}\right.
\end{equation}
Thereinto, $u_i,\odot_i, i\in \Lambda$ are obtained by solving equations (\ref{stable-structure-matrix}) or (\ref{stable-structure-matrices}),
and $v_i,\oplus_i, i\in \Gamma$ are derived by solving (\ref{stable-structure-matrix-2}).
To conclude, the procedure of designing pinning control for stabilizing PBN (\ref{equation-PBN}) to a prescribed state $\delta_{2^n}^\epsilon$ is established by the following steps:
\begin{enumerate}
  \item Deduce wiring digraph $\mathbf{G}=(\mathbf{V},\mathbf{E})$ of PBN (\ref{equation-PBN}). If $\mathbf{G}$ is acyclic, then goes to step 6, otherwise goes to next step.
  \item Find the fixed points and cycles existing in $\mathbf{G}$ by the depth-first search algorithm, thereupon determine a possible FAS $\{e_1,\cdots,e_\kappa\}$ and collect ending nodes $x_{\gamma_1},\cdots,x_{\gamma_\tau}$ into $\Lambda$.
  \item For each pinning node $x_{\gamma_\sigma},\sigma\in [1,\tau]$, if equations (\ref{stable-structure-matrix}) is solvable, then do the first operation; otherwise let $\Lambda_2\leftarrow\Lambda_2 \cup \{\gamma_\sigma\}$ and do the second operation:
      \begin{itemize}
        \item Solve structure matrices $M_{\odot_{\gamma_\sigma}}$ and $\widehat{\Psi}_{\gamma_\sigma}$ from (\ref{stable-structure-matrix}). (*Impose uniform controller (\ref{state-feedback-pinning-controllers}) to $x_{\gamma_\sigma}\in \Lambda_1$*)
        \item Solve a series of structure matrices $M^\kappa_{\odot_{\gamma_\sigma}}$ and $\widehat{\Psi}^\kappa_{\gamma_\sigma},\kappa\in[1,l_{\gamma_\sigma}]$ from (\ref{stable-structure-matrices}). (*Impose non-uniform controller (\ref{state-feedback-pinning-controller-1}) to each $f^\kappa_{\gamma_\sigma},\kappa\in[1,l_{\gamma_\sigma}]$ of $x_{\gamma_\sigma}\in \Lambda_2$*)
      \end{itemize}
  \item Solve optimization objective (\ref{objective-function}) subjecting to (\ref{constraint}), and derive $\Xi=\pi$ and matrices $\mathbf{Q}_{\zeta_1},\cdots,\mathbf{Q}_{\zeta_\pi}$ with $\lambda_{\zeta_1}=\cdots=\lambda_{\zeta_\pi}=1$. Then, collect pinning nodes $x_{\zeta_1},\cdots,x_{\zeta_\pi}$ into $\Gamma$.
  \item For node $x_{\zeta_\iota},\iota\in[1,\pi]$, solve structure matrices $M_{\oplus_{\zeta_\iota}}$ and $\Upsilon_{\zeta_\iota}$ from (\ref{stable-structure-matrix-2}).
(*Impose uniform controllers (\ref{equation-BCN-2}) to pinning nodes $x_{\zeta_1}\in \Gamma$*)
  \item The solved state feedback controllers are respectively imposed on the different kinds of pinning nodes as (\ref{equation-BCN-final}).
\end{enumerate}

\begin{remark}
Compared with traditional pinning control design in \cite{wanglq2019pinning} and \cite{liff2019set}, our control design strategy is more practical and can be implemented to sparse large-scale networks. Besides, the time complexity is reduced from $O(2^{2n})$ to $O(n2^{\max_{i\in[1,n]}|\mathcal{N}_i|})$.
\end{remark}

\section{Simulation}\label{section-stabilization}
In this section, a reduced mammalian cell cycle network \cite{faure2006dynamical} is performed to demonstrate the design process of our pinning control. In this setup, the mammalian cell-cycle encountering a mutated phenotype is postulated as a PBN, which has nine variables: $x_1,x_2,\cdots$ and $x_9$ (respectively representing genes Rb, E2F, CycE, CycA, p27, Cdc 20, Cdh 1, UbcH 10, and CycB). Particularly, this mammalian cell cycle may encounter mutation with probability 0.01, and the mutated mammalian cell cycle network has two possible constituent forms with the same probability 0.005. The form of the mutation depends on the expression status of external input CycD \cite{faryabi2008regulatory}.

To be specific, the logical dynamics of normal mammalian cell cycle network is described as:
\begin{equation}\left\{\begin{aligned}\label{ex-1}
f_1^1(\ast)=&(\neg x_3\wedge \neg x_4 \wedge \neg x_9)\vee (x_5 \wedge \neg x_9),\\
f_2^1(\ast)=&(\neg x_1\wedge \neg x_4 \wedge \neg x_9)\vee (x_5 \wedge \neg x_1\wedge \neg x_9),\\
f_3^1(\ast)=&x_2\wedge \neg x_1,\\
f_4^1(\ast)=&[x_2 \wedge\neg x_1\wedge \neg x_6 \wedge \neg (x_7\wedge x_8)]\vee \\
&~~[\neg x_1 \wedge \neg x_6 \wedge x_4 \wedge \neg (x_7\wedge x_8)],\\
f_5^1(\ast)=&(\neg x_3\wedge \neg x_4 \wedge \neg x_9)\vee [x_5 \wedge \neg (x_3\wedge x_4)\wedge \neg x_9],\\
f_6^1(\ast)=&x_9,\\
f_7^1(\ast)=&(\neg x_4\wedge \neg x_9)\vee x_6 \vee (x_5 \wedge \neg x_9),\\
f_8^1(\ast)=&\neg x_7 \vee [x_7 \wedge x_8 \wedge (x_6\vee x_4 \vee x_9)], \\
f_9^1(\ast)=&\neg x_6 \wedge \neg x_7.
\end{aligned}\right.
\end{equation}
Moreover, when the value of CycD is 0, the mutated mammalian cell cycle network is performed as form A:
\begin{equation}\left\{\begin{array}{l}\label{ex-2}
f^{2}_1(\ast)=\neg x_3\wedge \neg x_4 \wedge \neg x_9,~~f^{2}_2(\ast)=\neg x_1\wedge \neg x_4 \wedge \neg x_9,\\
f^{2}_3(\ast)=f_3(\ast),~~f^{2}_4(\ast)=f_4(\ast),~~f_5^{2}(\ast)=f_5(\ast),\\
f_6^{2}(\ast)=f_6(\ast),~~f_7^{2}(\ast)=(\neg x_4\wedge \neg x_9)\vee x_6,\\
f_8^{2}(\ast)=f_8(\ast),~~f_9^{2}(\ast)=f_9(\ast);
\end{array}\right.
\end{equation}
and when the value of CycD is 1, it is performed as form B:
\begin{equation}\left\{\begin{array}{l}\label{ex-3}
f^{3}_1(\ast)=0,~~~~~~~f^{3}_2(\ast)=f^{2}_2(\ast),~~f^{3}_3(\ast)=f^{2}_3(\ast),\\
f^{3}_4(\ast)=f^{2}_4(\ast),~~f_5^{3}(\ast)=f^{2}_5(\ast),~~f_6^{3}(\ast)=f^{2}_6(\ast),\\
f_7^{3}(\ast)=f^{2}_7(\ast),~~f_8^{3}(\ast)=f^{2}_8(\ast),~~f_9^{3}(\ast)=f^{2}_9(\ast).
\end{array}\right.
\end{equation}
As indicated in \cite{faryabi2008regulatory}, gene-activity profile would be forced to the desirable state $(1,1,1,0,0,0,1,0,0)^\top$ equivalent to $\delta_{512}^{44}$. Thereby, in the following, we execute pinning control to stabilize this PBN to $\delta_{512}^{44}$.

\textbf{Step 1:} the wiring digraph $\mathbf{G}$ of PBN (\ref{equation-PBN}) defined by (\ref{ex-1}), (\ref{ex-2}) and (\ref{ex-3}) is established in Fig. \ref{fig1}, and it has several fixed points and cycles. Then proceed to \textbf{Step 2}, one obtains a feasible FAS as $\{x_3\rightarrow x_1,x_1\rightarrow x_4,x_2\rightarrow x_4,x_4\rightarrow x_4,x_7\rightarrow x_4,x_8\rightarrow x_4,x_3\rightarrow x_5, x_5\rightarrow x_5,x_9\rightarrow x_6,x_8\rightarrow x_8,x_7\rightarrow x_9\}$, then the first group of pinning nodes are $x_1,x_4,x_5,x_6,x_8$ and $x_9$.

\begin{figure}[h!]
\centering
\includegraphics[width=0.27\textwidth=0.2]{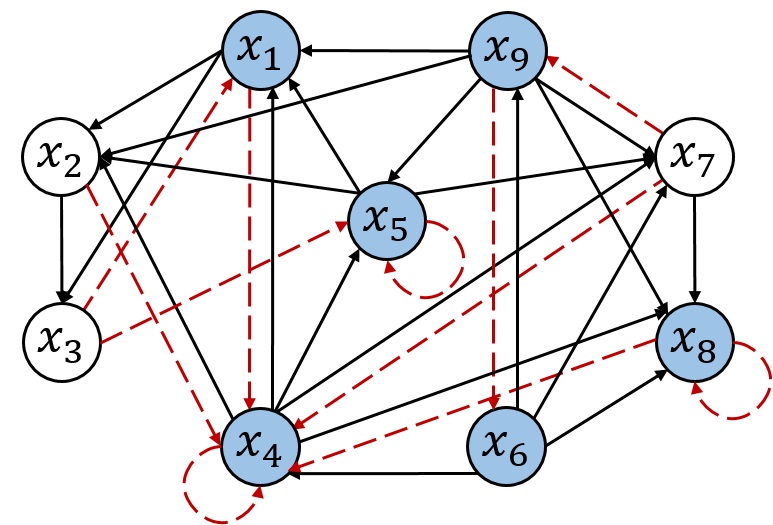}
\caption{The wiring diagrams of PBN (\ref{equation-PBN}) defined in the section \ref{section-stabilization}. In this figure, each directed edge $i\rightarrow j,i,j\in[1,9]$ expresses $x_i$ being the functional variable of $f^1_j$ or $f^2_j$ or $f^3_j$. Besides, red edges are the edges in the FAS; and blue nodes are the selected pinning nodes.}\label{fig1}
\end{figure}
Before operating \textbf{Step 3}, we denote the updating matrix of gene $x_i,i\in[1,9]$ by $\textbf{L}_i\in \mathscr{P}^{2\times 2^{|\mathcal{N}_i|}}$, which is easily derived by (\ref{Ex(t+1)}).
Besides, the structure matrices after deleting edges in FAS are obtained as follows:
\begin{equation*}G_1=\left(\begin{array}{cccccccc}
0&0.99&0&0&0.995&0.995&0&0.995 \\
1&0.01&1&1&1&0.005&1&0.005
\end{array}\right),
\end{equation*}
\begin{equation*}G_2=\textbf{L}_2=\left(\begin{array}{c}
0~\cdots~0~0.99~0~0~0~1~0~1 \\
1~\cdots~1~0.01~1~1~1~0~1~0
\end{array}\right)\in \mathscr{P}^{2\times 16},
\end{equation*}
$G_3=\textbf{L}_3=\delta_2[2,1,2,2]$, $G_4=\delta_2[2,1]$, $G_5=\delta_2[2,2,2,1]$, $G_6=\delta_2^2$, \begin{equation*}G_7=\textbf{L}_7=\left(\begin{array}{c}
1~1~0~0.99~1~1~0~0~1~1~0~1~1~1~0~1 \\
0~0~1~0.01~0~0~1~1~0~0~1~0~0~0~1~0
\end{array}\right),
\end{equation*}
$G_8=\delta_2[1,\cdots,1,2,1,1]\in \mathscr{P}^{2\times 16}$ and $G_9=\delta_2[2,1]$.
Meanwhile, according to equation (\ref{stable-struture-matrix-0}), it derives that:
\begin{equation*}\mathbf{T}_1=\left(\small{\begin{array}{c}
0~0~0.99~0.99~\underbrace{0~\cdots~0}_6~0.995~0.995~0~0~0.995~0.995 \\
1~1~0.01~0.01~\underbrace{1~\cdots~1}_6~0.005~0.005~1~1~0.005~0.005
\end{array}}\right),
\end{equation*}
$\mathbf{T}_4=\delta_2[\underbrace{2,\cdots,2}_{32},\underbrace{1,\cdots,1}_{32}]$, $\mathbf{T}_5=\delta_2[\underbrace{2,\cdots,2}_{12},\underbrace{1,\cdots,1}_4]$, $\mathbf{T}_6=\delta_2[2,2]$,
$\mathbf{T}_8=\delta_2[\underbrace{1,\cdots,1}_{26},2,2,\underbrace{1,\cdots,1}_4]$, and $\mathbf{T}_9=\delta_2[2,2,1,1]$.
Then, we can examine the solvability of equations (\ref{stable-structure-matrix}) for pinning nodes $x_i,i\in \{1,4,5,6,8,9\}$ according to Theorem \ref{theorem-2-condition}, it reveals that $\mathbf{T}_1=M_{\odot_1}\widehat{\Psi}_1(I_{16}\otimes \mathbf{L}_1) \Phi_4\in\mathscr{P}^{2\times 16}$ is unsolvable, while $\mathbf{T}_4=M_{\odot_4}\widehat{\Psi}_4(I_{64}\otimes \mathbf{L}_4) \Phi_6$,
$\mathbf{T}_5=M_{\odot_5}\widehat{\Psi}_5(I_{16}\otimes \mathbf{L}_5)\Phi_4$,
$\mathbf{T}_6=M_{\odot_6}\widehat{\Psi}_6(I_{2}\otimes \mathbf{L}_6) \Phi_1$,
$\mathbf{T}_8=M_{\odot_8}\widehat{\Psi}_8(I_{32}\otimes \mathbf{L}_8)\Phi_5$ and
$\mathbf{T}_9=M_{\odot_9}\widehat{\Psi}_9(I_{4}\otimes \mathbf{L}_9) \Phi_2$ are solvable.
Thereby, with respect to pinning node $x_1$, we obtain structure matrices (\ref{stable-structure-matrices}) as $\mathbf{T}^1_1=\delta_2[2,2,1,1,2,\cdots,2,1,1,2,2,1,1]$, $\mathbf{T}^2_1=\delta_2[2,\cdots,2,1,1,2,2,1,1]$ and $\mathbf{T}^{3}_1=\delta_2[2,\cdots,2]$. Then, the state feedback pinning controllers imposed on $f^1_1$, $f^2_1$ and $f^3_1$ can be respectively solved from $\mathbf{T}^1_1=M^1_{\odot_1}\widehat{\Psi}^1_1(I_{16}\otimes \mathbf{L}^1_1) \Phi_4$, $\mathbf{T}^2_1=M^2_{\odot_1}\widehat{\Psi}^2_1(I_{16}\otimes \mathbf{L}^2_1) \Phi_4$, and $\mathbf{T}^{3}_1=M^{3}_{\odot_1}\widehat{\Psi}^{3}_1(I_{16}\otimes \mathbf{L}^{3}_1) \Phi_4$. It derives that
$$M^1_{\odot_1}=M^2_{\odot_1}=M^{3}_{\odot_1}=\delta_2[1,1,1,2],\widehat{\Psi}^1_1=\delta_2[\underbrace{2,\cdots,2}_{14},1,1],$$
$$\widehat{\Psi}^2_1=\delta_2[\underbrace{2,\cdots,2}_{10},1,2,2,2,1,2],~\mathrm{and}~\widehat{\Psi}^{3}_1=\delta_2[\underbrace{2,\cdots,2}_{16}].$$
Simultaneously, based on equation (\ref{stable-structure-matrix}),
the efficient state feedback controllers applied on pinning genes $x_4,x_5,x_6,x_8,x_9$ can be respectively designed as $$M_{\odot_4}=\delta_2[1,1,2,2],\widehat{\Psi}_4=\delta_2[\underbrace{2,\cdots,2}_{32},\underbrace{1,\cdots,1}_{32}];$$ $$M_{\odot_5}=\delta_2[1,1,2,2],\widehat{\Psi}_5=\delta_2[\underbrace{2,\cdots,2}_{12},\underbrace{1,\cdots,1}_{4}];$$
$$M_{\odot_6}=\delta_2[2,2,2,2],\widehat{\Psi}_6=\delta_2[1,1];$$ $$M_{\odot_8}=\delta_2[1,1,2,2],\widehat{\Psi}_8=\delta_2[\underbrace{1,\cdots,1}_{26},2,2,\underbrace{1,\cdots,1}_4];$$ $$\mathrm{and}~M_{\odot_9}=[1,1,2,2], \widehat{\Psi}_9=[2,2,1,1].$$
After that, the wiring digraph of this pinning controlled PBN is acyclic, and all possible models achieve globally stabilized.

To proceed, perform \textbf{Step 4} on the stabilized PBN. The solution of optimization problem (\ref{objective-function}) is $\Xi=8$ and $\lambda_1=\lambda_2=\lambda_3=\lambda_4=\lambda_5=\lambda_7=\lambda_8=\lambda_9=1$. Subsequently, we obtain $\mathbf{Q}_1=\delta_2[1,\cdots,1]\in\mathscr{L}^{2\times 8}$ $\mathbf{Q}_2=\mathbf{Q}_7=\delta_2[1,\cdots,1]\in\mathscr{L}^{2\times 16}$, $\mathbf{Q}_3=\mathbf{Q}_5=\delta_2[1,\cdots,1]\in\mathscr{L}^{2\times 4}$, $\mathbf{Q}_4=\mathbf{Q}_9=\delta_2[0,0]$, $\mathbf{Q}_8=\delta_2[2,\cdots,2]\in\mathscr{L}^{2\times 16}$.
Then, go to \textbf{Step 5} and obtain
$$M_{\oplus_1}=M_{\oplus_2}=M_{\oplus_3}=M_{\oplus_5}=M_{\oplus_7}=\delta_2[1,1,1,1],$$ $$M_{\oplus_4}=M_{\oplus_8}=M_{\oplus_9}=\delta_2[2,2,2,2],$$
$$\Upsilon_1=\delta_2[1,\cdots,1]\in\mathscr{L}^{2\times8},\Upsilon_2=\Upsilon_7=\Upsilon_8=\delta_2[1,\cdots,1]\in\mathscr{L}^{2\times16},$$ $$\Upsilon_3=\Upsilon_5=\delta_2[1,1,1,1],~\mathrm{and}~\Upsilon_4=\Upsilon_9=\delta_2[1,1].$$
Finally, the pinning node sets are designed as $\Lambda\cap\Gamma=\{1,4,5,8,9\}$, $\Lambda\backslash\Gamma=\{6\}$, $\Gamma\backslash\Lambda=\{2,3,7\}$ and $[1,n]\backslash(\Lambda\cup\Gamma)=\emptyset$, and then under the following designed pinning control, PBN (\ref{equation-PBN}) defined here can be stabilized to state $\delta_{512}^{44}$.

\small{\begin{equation*}\left\{\begin{array}{l}
x^1_1(\ast)=f^1_1(\ast)\vee u^1_1(\ast)\vee v_1(\ast),\\
u^1_1(\ast)=x_3\wedge \neg x_4\wedge \neg x_9, v^1_1(\ast)=\neg x_5\wedge x_9,\\
x^2_1(\ast)=f^2_1(\ast)\vee u^2_1(\ast)\vee v^2_1(\ast),\\
u^2_1(\ast)=x_3\vee x_4 \vee x_9,v^2_1(\ast)=x_4 \vee x_9,\\
x^3_1(\ast)=f^3_1(\ast)=0,\\
x_2(\ast)=f_2(\ast)\vee v_2(\ast),v_2(\ast)=\neg x_5\vee x_1\vee x_9,\\
x_3(\ast)=f_3(\ast)\vee v_3(\ast),v_3(\ast)=\neg x_2 \vee x_1,\\
x_4(\ast)=f_4(\ast)\vee u_4(\ast)\wedge v(\ast), u_4(\ast)=\neg x_6,v(\ast)=x_6, \\
x_5(\ast)=f_5(\ast)\vee u_5(\ast)\wedge v_5(\ast),\\
u_5(\ast)=\neg x_4 \wedge \neg x_9,v_5(\ast)=x_4\vee x_9,\\
x_6(\ast)=f_6(\ast)\wedge u_6(\ast),u_6(\ast)=\neg x_9,\\
x_7(\ast)=f_7(\ast)\vee v_7(\ast),v_7(\ast)=\neg x_6,\\
x_8(\ast)=f_8(\ast)\vee u_8(\ast)\wedge v_8(\ast),\\
u_8(\ast)=x_7\wedge \neg x_8\wedge(x_4\vee x_6 \vee x_9),v_8(\ast)=x_7\wedge \neg (x_4\vee x_6\vee x_9), \\
x_9(\ast)=f_9(\ast)\vee u_9(\ast)\wedge v_9(\ast), u_9(\ast)=\neg x_6\wedge x_7, v_9(\ast)=x_6.
\end{array}\right.
\end{equation*}}

\section{Conclusion}\label{section-conclusion}
This paper studied the stabilization of PBNs via pinning control based on network structure. First, the pinning nodes were selected by finding the FAS. Concerning the stochasticity of PBNs, uniform and non-uniform state feedback controllers were respectively designed to stabilize PBNs. Moreover, a criterion was obtained to determine whether each pinning node could utilize uniform controllers. Since a stable PBN may have multiple steady states, we designed another pinning control to further stabilize it at the desired state. This control design strategy was based on the network structure (local neighbors' information), rather than state transition matrix (global information). Thereby, the designed pinning control could be utilized to handle the networks with sparse connections, but not so sparse that the graph is in danger of becoming disconnected. Specially, we required $n\gg k \gg \ln(n)\gg 1$, where $k \gg \ln(n)$ guaranteed that the wiring digraph would be connected.
Eventually, the obtained results was demonstrated by a biological example about the mammalian cell-cycle encountering a mutated phenotype.


\end{document}